\newtheorem{definition}{Definition}
\newtheorem{theorem}{Theorem}
\newtheorem{assumption}{Assumption}
\newtheorem{lemma}{Lemma}
\journal{Journal of Multivariate Analysis}
\begin{document}

\begin{frontmatter}



\title{Joint Estimation of Edge Probabilities for Multi-layer Networks via Neighborhood Smoothing} 


\author[label1]{Yong He} 
\author[label1]{Zizhou Huang}
\author[label2]{Bingyi Jing}
\author[label3]{Diqing Li\corref{cor1}}

\address[label1]{Institute for Financial Studies, Shandong University}
 
\address[label2]{Southern University of Science and Technology}
	
\address[label3]{School of Statistics and Data Science, Zhejiang Gongshang University}
	
\cortext[cor1]{Corresponding author. \\ Email: \href{mailto:dqli@mail.zjgsu.edu.cn}{dqli@mail.zjgsu.edu.cn}}

\begin{abstract}
In this paper we focus on jointly estimating the edge probabilities for multi-layer networks. We define a novel multi-layer graphon, a ternary function in contrast to the bivariate graphon function in the literature by introducing an additional latent layer position parameter, which is model-free and covers a wide range of multi-layer networks. We develop a computationally efficient two-step neighborhood smoothing algorithm to estimate the edge probabilities of multi-layer networks, which requires little tuning and fully utilize  the similarity across  both network layers and  nodes. Numerical experiments demonstrate the advantages of our method over the existing state-of-the-art ones. A real Worldwide Food Import/Export Network dataset example is analyzed to illustrate the better performance of the proposed method over benchmark methods in terms of link prediction.
\end{abstract}



\begin{keyword}
Neighbourhood smoothing \sep Multi-layer networks \sep Graphon \sep Link prediction; 


\end{keyword}

\end{frontmatter}

\section{Introduction}
The rapid developments of science and technology have advanced the analysis of network structure data, which has broad application types such as  social networks, political networks, trading networks, biological networks, among others. Network data analysis is associated with a wide range of statistical inference problems, including link prediction, graph search and community detection, thereby enhancing the diversity of  network structures and expanding their applicability. Despite its significant progress in the last decades, most existing literature mainly focuses on the analysis of one individual large-scale network. However, with the explosion of data in real-world, network analysis is also evolving from the study of traditional single-layer network to more complex multi-layer networks, including multi-relational networks, dynamic networks and hyper-graph networks\citep{changepointdetection,Dynamicgraphon,YubaiYuan,hypergraphons,Zhao2019ChangepointDI}, which  capture richer interactions and more intricate system dynamics. 

The  focus of this paper is to jointly estimate the edge probabilities  in multi-layer networks, in which relationships between different nodes are reflected in multiple modalities. In contrast to single-layer networks, multi-layer networks can reveal more complex system dynamics and structural features. For instance, in neuroimaging studies, multi-layer networks can capture various interactions between regions of interest (ROIs) in the brain in different cognitive tasks.  Another motivating example is the global trade networks, where each network reveals different goods of trade among world-wide countries.  
Link prediction in multi-layer networks poses significant challenges, which necessitates not only taking into account the potential connections among nodes within a specific layer but also  the interactions  that occur across different layers. The existing work on link prediction mainly focus on single-layer. A naive method which directly utilize single-layer analysis methods to each layer would overlook crucial inter-layer information, thereby leading to inferior  link prediction  accuracy. This urges us developing innovative strategies that effectively integrate similarity information across multiple layers to enhance prediction accuracy.
The existing link prediction methods for multi-layer networks can be summarized into two categories. The first category predicts  links by calculating similarity score for non-adjacent nodes\citep{9634845}, which, however,   fails to provide a network generation mechanism and thereby having no  theoretical guarantee. The second category is on probabilistic network models, such as the multi-layer stochastic block models (MSBM). \citet{xu2023covariate} and \citet{jing2021community} propose to adopt Tucker decomposition to partition  multiple networks into communities, treating the networks as tensor objects. Although the edge probabilities can be estimated empirically combined with community detection algorithms, theoretical analysis would necessitate a strict structure assumption on the underlying probability matrix.

Graphon estimation has drawn growing attention for network analysis \citep{Wolfe2013NonparametricGE,Chan2014ACH,doi:10.1073/pnas.1400374111,minimaxGao2015,10.1214/16-AOS1497,10.1093/biomet/asab057}. By the Aldous-Hoover representation theorem \citep{aldous1981representations, hoover1979relations}, in the exchangeable network model, given the latent positions $\xi_i$ and $\xi_j$ of nodes $i$ and $j$ that can be  treated as $i.i.d.$ random variables  form Uniform $\left( 0,1 \right)$, then the link probability $P_{ij}$ can be expressed as a symmetric measurable function $f(\xi _i,\xi _j)$ that maps $\left[ 0,1 \right]^2$ to $\left[ 0,1 \right]$. For single-layer network, \citet{zhang2017estimating} proposed a neighborhood smoothing estimation method for piecewise-Lipschitz graphon functions, which achieves the best error rate among existing computationally feasible methods. However, it is still unclear how to adapt the neighborhood smoothing estimation method to deal with multi-layer network such that inter-layer information can also be utilized to improve estimation accuracy.

In this article we address the challenging problem building upon the work by \citet{zhang2017estimating}. We propose  a novel two-step neighborhood smoothing algorithm to estimate edge probabilities for multi-layer network which has little requirement on the network structures. We first adopt an adaptive neighborhood selection technique based on a well-designed similarity metric to identify neighborhood layers that are close to a specific layer. Then we locate the neighboring nodes to a specific node in each neighborhood layer. Our neighborhood smoothing estimator is finally obtained by averaging the adjacency matrices over both layers' and nodes' neighborhoods.

The main contributions of the work can be summarized as follows. Firstly  we introduce a novel  ternary graphon function $ f(\xi_i,\xi_j,\eta_k) $, where the third dimension $ \eta_k $ denotes the latent position of the $ k $th layer, and $ \xi_i $ denotes the latent position of node $i$ within that layer. The new ternary graphon  generalizes the traditional bivariate graphon function, and covers various multi-layer models as special cases. For instance, when $\eta_k$ remains constant, it degenerates to a traditional graphon. If $\eta_k$ is a time instant $t_k$, it can be viewed as a dynamic graphon \citep{Dynamicgraphon}. When  $\eta_k$ only takes two possible values, it can be used to detect change points in temporal networks\citep{changepointdetection}. When $f$ is piecewise-constant, it is in concordance with the multi-layer stochastic block model in \citet{Paul2015ConsistentCD}. In the case where $\eta_k$ is observed, the framework is equivalent to the multi-layer graphon model of \citet{Chandna2020NonparametricRF}.  Secondly, we propose a computationally efficient two-step neighborhood smoothing algorithm to estimate the edge probabilities of multi-layer networks, which requires little tuning and fully utilize  the similarity across  both network layers and  nodes. At last, we establish the convergence rate of our neighborhood smoothing estimator of probability matrices for multi-layer networks. The derived convergence rate  is faster than the convergence rate derived by \citet{zhang2017estimating} under certain condition on the scaling of $n$ and $K$.



\section{Methodology}
\subsection{Neighborhood smoothing estimator for multi-layer networks}

In this section, we study the problem of inferring  the generative mechanism of undirected multi-layer networks based on a single realization for each layer. 
Firstly we briefly review the problem setup for single-layer network. The data consist of the network adjacency matrix  $A\in\{0,1\}^{n\times n}$, where $n$ is the number of nodes and $A_{ij}=A_{ji}=1$ if there is an edge between nodes $i$ and $j$.
We assume each element $A_{ij}$  is independently generated from $\mathrm{Ber(}P_{ij})$ trials, where $P_{ij}$ is associated with the unknown latent variables $\xi _i$ and $\xi _j$ of nodes $i$ and $j$, as well as the underlying graphon function $f$, see more details in \cite{zhang2017estimating}. 

Throughout this paper, we focus on  multi-layer networks without cross-layer edges. The adjacency matrices $A^{k}\,(k=1,\ldots,K)$ of the multi-layer network are collectively denoted as a tensor $\boldsymbol{A}\in \{0,1\}^{n\times n\times K}$, where $n$ is the number of nodes and $K$ is the number of layers.  Each element  $A_{ijk}$ corresponds to the presence or absence of an edge between nodes $i$ and $j$ in the  $k$-layer.  The corresponding edge probabilities $\boldsymbol{P}=E\left( \boldsymbol{A} \right) \in \mathbb{R} ^{n\times n\times K}$ can be denoted as  $\boldsymbol{P}=\{ \left( P^k \right) _{n\times n}, \ k=1,\ldots,K \}$. Our goal is to estimate the edge probabilities $P_{ijk}$ based on the  realizations $A_{ijk}$. In the following we introduce a novel ternary graphon function to better illustrate the generative mechanism of $ A_{ijk} $, which generalizes the traditional bivariate graphon function with an additional  latent layer position parameter. In detail, we assume that the observed $ A_{ijk} $ are  independent  $ \mathrm{Ber}(P_{ijk}) $ trials, where $ P_{ijk} = f(\xi_i, \xi_j, \eta_k) $. The edge probabilities $ f(\xi_i, \xi_j, \eta_k) $ is a ternary graphon function that characterizes the multi-layer network, which is formally defined as follows.

\begin{definition}[Multi-layer Graphon]\label{definition1}
	The multi-layer network graphon function $f(\xi_i, \xi_j, \eta_k):[0,1]^2\times [0,1]\rightarrow [0,1]$ is a ternary function such that $P_{ijk}=f\left( \xi _i,\xi _j,\eta _k \right)$, where $\xi_i$, $\eta_k$ are independently sampled from distribution $P_{\xi}$ and $P_{\eta}$ respectively, and both $P_{\xi}$ and $P_{\eta}$ are supported on $\left[ 0,1 \right]$.
\end{definition}

For a fixed layer $k$ with a given latent parameter $\eta_k$, the connection probabilities are given by $P_{ijk} =f_k(\xi_i, \xi_j)=f(\xi_i, \xi_j, \eta_k)$. Since $\eta_k$ is fixed for the entire layer and the node-specific latent variables $\xi_i$ are i.i.d., the function $f(\cdot, \cdot, \eta_k)$ acts as a standard, symmetric graphon for that specific layer. Therefore, conditional on the layer parameter $\eta_k$, the network in layer $k$ is an exchangeable graph, and its existence and representation are explicitly guaranteed by the Aldous-Hoover theorem \citep{aldous1981representations}. 

While Defination \ref{definition1} of multi-layer graphon is similar to the defination in \cite{Chandna2020NonparametricRF}, there is a fundamental distinction in how layer information is modeled. The approach in \cite{Chandna2020NonparametricRF} relies on observed covariates $z_l$ as proxies for layer characteristics and employs a flattened graphon that averages over the network-specific dimension. In contrast, our model treats the layer variable $\eta_k$ as latent, making it suitable for inferring hidden layer structures without requiring observed covariates. In the following, we adopt the common uniform distribution on  $\left[ 0,1 \right]$ for  $P_{\xi}$ and $P_{\eta}$ in Definition \ref{definition1}. 

Next, we propose a new two-step neighborhood smoothing algorithm, incorporating interactions information across multiple layers to achieve more accurate estimation. 
Intuitively, when two networks exhibit similar connection patterns, their corresponding probability matrices will be similar. Specifically, for $P_{ijk}=f(\xi _i,\xi _j,\eta _k)$, when $\eta _k$ is close to another layer position variable $\eta_{k^\prime}$, we would expect that $P_{ijk}$ and $P_{ijk^\prime}$ still have similar values with some smoothness in $f$. Therefore, we define the neighborhood set of the $k$th layer network as $\mathcal{N}^{k}=\{k^{\prime}:P_{\cdot,\cdot,k^{\prime}}\approx P_{\cdot,\cdot,k}\}$.
Parallelly, for similar link patterns between nodes within a specific layer (i.e., fixing $\eta_k$), we observe that for $P_{ijk}=f(\xi _i,\xi _j,\eta _k)$, if $\xi _i$ and $\xi_{i^\prime}$ are relatively close, then $f\left( \xi _i,\cdot ,\eta _k \right)$ should be close to $f\left( \xi_{i^\prime},\cdot ,\eta _k \right)$, leading to  $P_{i,\cdot,k}\approx P_{i^\prime,\cdot,k}$. This indicates that the $i$th row and the $i^\prime$th row in the probability matrix of the $k$th layer network are fairly comparable. As a result, the adjacency matrix likewise has a corresponding similarity, so that $A_{i,\cdot,k}\approx A_{i^\prime,\cdot,k}$, suggesting that the configurations in these two rows of the adjacency matrix are comparable.

The set of neighboring nodes of node $ i $ within a neighboring network in the $k$th layer is denoted as $\mathcal{N}_{i}^{k^{\prime}}$, where $ k^{\prime} \in \mathcal{N}^k $. Thus, we can express $\mathcal{N} _{i}^{k^{\prime}}=\{i^{\prime}:P_{i^{\prime}\cdot k^{\prime}}\approx P_{i\cdot k^{\prime}}\}$.
The method for selecting $\mathcal{N}_{i}^{k^{\prime}}$ and $\mathcal{N}^{k}$ will be detailed in the next subsection. After identifying all neighboring network layers $\mathcal{N}^{k}$ for  the $k$th layer, and locating the neighborhood  $\mathcal{N} _{i}^{k^{\prime}}$ for node $i$ in the layer $k^{\prime}\in \mathcal{N} ^k$, we then estimate $P_{ijk}$ by a neighborhood smoothing procedure such that
\begin{equation}\label{equation::MNSestimator}
	\tilde{P}_{ijk}=\frac{\sum_{k^{\prime}\in \mathcal{N} ^k}{\sum_{i^{\prime}\in \mathcal{N} _{i}^{k^{\prime}}}{A_{i^{\prime}jk^{\prime}}}}}{\sum_{k^{\prime}\in \mathcal{N} ^k}{|}\mathcal{N} _{i}^{k^{\prime}}|}.
\end{equation}
The denominator $\sum_{k' \in \mathcal{N}^k} |\mathcal{N}_{i}^{k'}|$ is the total number of  neighboring nodes that are similar to node $i$ discovered across all neighboring networks in the $k$th layer.
When the network is symmetric, we can utilize the following symmetric estimator:
\begin{equation*}
	\hat{P}^k=\left[ \tilde{P}^k+\left( \tilde{P}^k \right) ^{\mathrm{T}} \right] /2.
\end{equation*}

An alternative approach is to smooth the elements of the adjacency matrix over $\sum_{k^{\prime}\in \mathcal{N} ^k}^{}{\mathcal{N} _{i}^{k^{\prime}}}\times \mathcal{N} _{j}^{k^{\prime}}$. Nevertheless, in the subsequent algorithms, we can vectorize and matrixize the expression \eqref{equation::MNSestimator}, significantly alleviating computational burden. We keep the potential relationships between layers and avoid fragmentation of the multi-layer networks by applying the neighborhood smoothing method on both the layers and nodes collectively. Additionally, we will demonstrate in the upcoming part that using quantile selection for neighbors offers more flexibility and adaptability.

\subsection{Adaptive neighborhood selection for layers and nodes}

In this section we further discuss how to select neighboring layers $\mathcal{N}^k$ and nodes $\mathcal{N}_{i}^{k'}$ in the multiple network. Notice that when $f\left( \cdot, \cdot, \eta_k \right) \approx f\left( \cdot, \cdot, \eta_\ell \right)$, the $k$th layer of the network exhibits similar probabilistic behavior to the $\ell$th layer. This implies that $ P^{k} \approx P^{\ell} $,  further indicating that the adjacency matrices $ A^{k}$ and $A^{\ell}$ are also similar. We define layers with this property as neighboring layers. 

We use the $\ell_2$ distance between the graphon function slices of the layers to identify the neighboring layers of the $k$th layer, defined as follows:
\begin{equation*}
		d\left( k,k^{\prime} \right) =\left\| f(\cdot ,\cdot ,\eta _k)-f(\cdot ,\cdot ,\eta _{k^{\prime}}) \right\| _2
		=\left\{ \int_0^1{\int_0^1{\left| f(u,v,\eta _k)-f(u,v,\eta _{k^{\prime}}) \right|^2}}\mathrm{d}u\mathrm{d}v \right\} ^{1/2}.
\end{equation*}
Other norm distances can also be utilized for this measurement. However, the $\ell_2$ norm is particularly convenient for calculation and theoretical analysis.

Fixing $\eta$, we treat $f(u,v,\eta)$ as a traditional bivariate graphon function such that $f_k\left( u,v \right) =f(u,v,\eta _k)$ and $f_{k^{\prime}}\left( u,v \right) =f(u,v,\eta _{k^{\prime}})$. For integrable bivariate functions $f_k$ and $f_{k^{\prime}}$ defined on $[0,1]^2$, we can define the inner product
\begin{equation*}
	\langle f_k,f_{k^{\prime}}\rangle =\int_0^1{\int_0^1{f_k}}(u,v)f_{k^{\prime}}(u,v)\mathrm{d}u\mathrm{d}v.
\end{equation*}
Thus, we can rewrite $d^2(k, k^{\prime})$ as
\begin{equation}\label{equation::original_distance}
		d^2(k,k^{\prime})=\langle f_k-f_{k^{\prime}},f_k-f_{k^{\prime}}\rangle	=\langle f_k,f_k\rangle +\langle f_{k^{\prime}},f_{k^{\prime}}\rangle -2\langle f_k,f_{k^{\prime}}\rangle . 
\end{equation}
The third term can be estimated by $\langle A^k,A^{k^{\prime}}\rangle /n^2$ since different layers  are independent. However, The first two terms cannot be directly estimated via the same approach. To explain this, we assume $A^k=f_k+\epsilon_k$, where $\epsilon_k$ is error dependent of $f_k$ and $E(\epsilon_k)=0$. The cross-term $\langle f_k,f_{k^{\prime}}\rangle$ can be unbiasedly estimated by $\langle A^k,A^{k^{\prime}}\rangle /n^2$ because the noise $\epsilon_k$ and $\epsilon_{k'}$ are independent. However, estimating the self-terms  $\langle f_k,f_k\rangle$ using $\langle A^k, A^k \rangle / n^2$ introduces a significant bias. Since $A^k = f_k + \epsilon_k$, the expectation contains a variance term $E\langle \epsilon_k, \epsilon_k \rangle > 0$, which creates a large bias that is difficult to remove. To address this issue, assuming that for $f(\cdot ,\cdot ,\eta _k)$ with $k=1,\ldots,K$, we can find $\tilde{k}\ne k$ such that $\left| \eta _k-\eta _{\tilde{k}} \right|\le e_K$ with high probability where the sequence $e_K$ is a error rate function of $K$. We further assume that the function $f$ is Lipschitz continuous with a Lipschitz constant of 1, then $\left\| f\left( \cdot ,\cdot ,\eta _k \right) -f(\cdot ,\cdot ,\eta _{\tilde{k}}) \right\| _2\le e_K$, so we can approximate  $\left. \langle f_k,f_k \right. \rangle $ by $\left. \langle f_k,f_{\tilde{k}} \right. \rangle $, where the latter can now be estimated by $\left. \langle A^k,A^{\tilde{k}}\rangle /n^2 \right. $. The basic idea is to replace the intractable self-inner product of a graphon slice with the tractable cross-inner product of two different but nearly identical slices. The same technique can be used to approximate the second term in \eqref{equation::original_distance}.
We can rearrange $d^2(k,k^{\prime})$ as follows:
\begin{equation*}
		d^2\left( k,k^{\prime} \right) =\left. \langle f_k-f_{k^{\prime}},f_k \right. \rangle -\left. \langle f_k-f_{k^{\prime}},f_{k^{\prime}} \right. \rangle
		\le 2\max_{\ell \ne k,k^{\prime}} \left| \left. \langle f_k-f_{k^{\prime}},f_{\ell} \right. \rangle \right|+2e_K.
\end{equation*}
It is not necessary to provide the exact estimates for each pair of $d(k,k^{\prime})$ in the neighborhood selection process. Alternatively, we can find the neighbor set by comparing an approximate upper bound of  $d(k,k^{\prime})$, which are more tractable.  The inner product on the right-hand side of the above equation can be obtained by :
\begin{equation}\label{equation::layer_distance}
	\tilde{d}^2\left( k,k^{\prime} \right) =\max_{\ell \ne k,k^{\prime}} \left| \left. \langle A^k-A^{k^{\prime}},A^{\ell} \right. \rangle \right|/n^2=\max_{\ell \ne k,k^{\prime}} \left| \mathrm{tr}\left[ \left( A^k-A^{k^{\prime}} \right) ^{\mathrm{T}}A^{\ell} \right] \right|/n^2.
\end{equation}
Therefore, we can estimate the layer distance approximately without knowing graphon $f$ and latent variable $\eta$.
Intuitively, the neighborhood $\mathcal{N} ^k $ should include $k^{\prime}$ with small $\tilde{d}(k, k^{\prime})$. To formalize this, let $q_k(h_1)$ denote the $h_1$th sample quantile of the set $\{ \tilde{d}(k, k^{\prime}) : k^{\prime} \neq k \}$, where $h_1$ is a tuning parameter, and set
\begin{equation}\label{equation::layer_neighbor}
	\mathcal{N} ^k=\{k^{\prime}\ne k:\tilde{d}(k,k^{\prime})\le q_k(h_1)\}.
\end{equation}

The choice of $h_1$ will be guided by both the theory in § \ref{2.4}, which suggests the order of $h_1$, and empirical performance, which suggests the constant factor. More details can be found in the supplementary material.

For each neighboring layer $ k^{\prime} \in \mathcal{N}^k$, we seek to identify the set of neighboring node $\mathcal{N} _{i}^{k^{\prime}}$ for node $i$. This involves finding node $i^{\prime}$ in layer $k^{\prime}$ that exhibit similar probabilistic behavior to node $i$ within the same layer. Analogous to identifying node neighbors in a  single-layer network, we adopt a similar selection criterion:
\begin{equation}\label{equation::MNSnode_neighbor}
	\mathcal{N} _{i}^{k^{\prime}}=\{i^{\prime}:P_{i^{\prime}\cdot k^{\prime}}\approx P_{i\cdot k^{\prime}}\}=\left\{ i^{\prime}\ne i:\tilde{d}\left( i,i^{\prime} \right) \le q_i(h_2) \right\}
\end{equation}
as that in \citet{zhang2017estimating}. Unlike \citet{zhang2017estimating}, whose quantile depends solely on $n$, the quantiles $h_1$ and $h_2$ used in our method for selecting neighbor layers and neighbor nodes are functions of both $n$ and $K$. Intuitively, we narrow the range of neighbor nodes while placing greater emphasis on information from neighbor layers. As a result, our approach achieves a faster convergence rate compared to that of \citet{zhang2017estimating}.

\subsection{Multi-layer neighborhood smoothing algorithm}

In this section, we provide a detailed exposition of the algorithm for Multi-layer Neighborhood Smoothing (MNS). 

To calculate the average of the adjacency matrix elements associated with neighbor nodes in each neighboring layer, we begin by evaluating the term $\sum_{i^{\prime} \in \mathcal{N}_{i}^{k^{\prime}}} A_{i^{\prime}jk^{\prime}}$ and $\mathcal{N}_{i}^{k^{\prime}}$ as outlined in \eqref{equation::MNSestimator}. This involves iterating through all relevant adjacency matrices and calculating $\sum_{i^{\prime} \in \mathcal{N}_{i}} A_{i^{\prime}j}$ for each node in each layer. Subsequently, we identify the neighboring layers for each given layer and compute the average of the corresponding adjacency matrix elements.

The core of the algorithm focus on measuring the similarity between slices of the adjacency tensor, which captures the distance between neighboring nodes and layers. To improve computational efficiency, the distance measurements  are vectorized for matrix operations. Specifically,  the node distance in layer $k$ can be formulated  in terms of Chebyshev distance as follows:
\begin{equation*}\label{equation::node_chebyshev}
	\tilde{d}^2_k\left( i,i^{\prime} \right) =\max_{s\ne i,i^{\prime}} \left| \left. \langle A_{i\cdot}^{k}-A_{i^{\prime}\cdot}^{k},A_{s\cdot}^{k} \right. \rangle \right|/n=\max_{s\ne i,i^{\prime}} \left| \left[ \left( A^k \right) ^2/n \right] _{is}-\left[ \left( A^k \right) ^2/n \right] _{i^{\prime}s} \right|.
\end{equation*}
From this process, we can derive a symmetric distance matrix $D^k\in \mathbb{R} ^{n\times n}$ for $k=1,\ldots,K$, where $D_{ij}^{k}$  represents the Chebyshev distance $\tilde{d}^2_k\left( i, j \right)$ between the $i$th and $j$ th rows of  $\left( A^k \right) ^2/n$. By comparing each element of $D^{k}$ with the specified quantile $q_i\left( h_2 \right)$, we can create the tensor $\boldsymbol{B}=\left\{ B^k,k=1,...,K \right\} \in \left\{ 0,1 \right\} ^{n\times n\times K}$ denoted by the node neighborhood matrices $B^k\left( k=1,...,K \right)$ collectively, where $B_{i j}^{k}=1$ indicates that node $j$ is a neighbor of node $i$ in layer $k$. The number of non-zero elements in the $i$th column of $B^{k}$ indicates the number of neighboring nodes for node $i$. The product $A^{k} B^{k}$ conveniently records the $\sum_{i^{\prime}\in \mathcal{N} _{i^{\prime}}^{k}}{A_{i^{\prime}jk}}$ for each node in the $k$th layer, facilitating the computation of aggregated adjacency information within the neighborhood. These products could be collectively denoted as tensor $\boldsymbol{C}=\left\{ A^kB^k,k=1,...,K \right\} \in \mathbb{R} ^{n\times n\times K}$.

Similarly, the layer distance defined in \eqref{equation::layer_distance} can be rewritten as:
\begin{equation*}\label{equation::layer_chebyshev}
	\begin{split}
		\tilde{d}^2\left( k,k^{\prime} \right) &=\max_{\ell \ne k,k^{\prime}} \left| \mathrm{tr}\left[ \left( A^k-A^{k^{\prime}} \right) ^{\mathrm{T}}A^{\ell} \right] \right|/n^2\\
		&=\max_{\ell \ne k,k^{\prime}} \left| \mathrm{tr}\left[ \left( A^k \right) ^{\mathrm{T}}A^{\ell} \right] -\mathrm{tr}\left[ \left( A^{k^{\prime}} \right) ^{\mathrm{T}}A^{\ell} \right] \right|/n^2.
	\end{split}	
\end{equation*}

This yields the layer distance matrix $D^{\mathrm{layer}}\in \mathbb{R}^{K\times K}$, where $D_{kk^{\prime}}^{\mathrm{layer}}$ captures the Chebyshev distance $\tilde{d}^2\left( k,k^{\prime} \right)$ between the $k$th row and the $k^{\prime}$th row of $T$, defined as:
$$
T=\left[ \begin{matrix}
	\mathrm{tr}\left[ \left( A^1 \right) ^{\mathrm{T}}A^1 \right]&		\cdots&		\mathrm{tr}\left[ \left( A^1 \right) ^{\mathrm{T}}A^K \right]\\
	\vdots&		\ddots&		\vdots\\
	\mathrm{tr}\left[ \left( A^K \right) ^{\mathrm{T}}A^1 \right]&		\cdots&		\mathrm{tr}\left[ \left( A^K \right) ^{\mathrm{T}}A^K \right]\\
\end{matrix} \right] /n^2.
$$
With the threshold  $q_k\left( h_1 \right)$ in \eqref{equation::layer_neighbor}, we can derive the layer neighborhood matrix $B^{\mathrm{layer}} \in \{0, 1\}^{K \times K}$, where $B_{k\ell}^{\mathrm{layer}} = 1$ indicates layer $\ell$ is a neighbor of $k$ such that $D_{k\ell}^{\mathrm{layer}}\le q_k\left( h \right)$.
In addition, we can use $\boldsymbol{C}\times _3B_{\cdot k}^{\mathrm{layer}}$ to record $\sum_{k^{\prime}\in \mathcal{N} ^k}{A^{k^{\prime}}}B^{k^{\prime}}=\sum_{k^{\prime}\in \mathcal{N} ^k}{\sum_{i^{\prime}\in \mathcal{N} _{i^{\prime}}^{k}}{A_{i^{\prime}jk^{\prime}}}}$ for each node in the $k$th layer, where $B_{\cdot k}^{\mathrm{layer}}$ is the $k$th column of $B^{\mathrm{layer}}$, and $\times _3$ denotes the mode-3 tensor product. We also use $1^{\mathrm{T}} \cdot \left[ \boldsymbol{B} \times_3 B_{\cdot k}^{\text{layer}} \right]$ to effectively sums the elements along the corresponding columns of $\boldsymbol{B} \times_3 B_{\cdot k}^{\text{layer}}$, recording  $\sum_{k^{\prime}\in \mathcal{N} ^k}{|\mathcal{N} _{i}^{k^{\prime}}|}$ for each node, where $1^{\mathrm{T}}$ is a row vector of all ones  and $\boldsymbol{B}\times _3B_{\cdot k}^{\mathrm{layer}}$  would represent $\sum_{k^{\prime}\in \mathcal{N} ^k}{B^{k^{\prime}}}$. The detailed algorithm for MNS is summarized in the Algorithm \ref{algo_MNS} below.
\begin{algorithm}
	\SetKwData{Left}{left}\SetKwData{This}{this}\SetKwData{Up}{up}
	\SetKwFunction{Union}{Union}\SetKwFunction{FindCompress}{FindCompress}
	\KwIn{Adjacency matrix $A^{k}\left( k=1,\ldots ,K \right) $.}
	\KwOut{Probability matrix $P^{k}\left( k=1,\ldots ,K \right) $.}
	\BlankLine
	\For{$ k=1,\ldots ,K $ }{
		\textnormal{Obtain $D^{k}$ of each layer by calculating the Chebyshev distance of $\left( A^k \right) ^2/n$.}\\
		\textnormal{Obtain $B^{k}$ controlled by threshold  $q_k\left( h_2 \right)$.}\\
		\textnormal{Calculate $A^kB^k$ to record $\sum_{i^{\prime}\in \mathcal{N} _{i^{\prime}}^{k}}{A_{i^{\prime}jk}}$ for each node in each layer.}\\
	}
	\textnormal{Obtain $D^{\mathrm{layer}}$ by calculating the Chebyshev distance between different rows of $T$.}\\
	\textnormal{Obtain $B^{\mathrm{layer}}$ controlled by threshold  $q_k\left( h_1 \right)$.}\\
	\textnormal{Create $\boldsymbol{B} = \left\{ B^k,k=1,...,K \right\}$ and calculate $\boldsymbol{C}=\left\{ A^kB^k,k=1,...,K \right\} $.}\\
	\For{$ k=1,\ldots ,K $}{
		\textnormal{Calculate $\boldsymbol{C}\times _3B_{\cdot k}^{\mathrm{layer}}$ to record $\sum_{k^{\prime}\in \mathcal{N} ^k}{A^{k^{\prime}}}B^{k^{\prime}}$ for each node. }\\
		\textnormal{Calculate $1^{\mathrm{T}}\cdot \left[ \boldsymbol{B}\times _3B_{\cdot k}^{\mathrm{layer}} \right]$ to record $\sum\nolimits_{k^{\prime}\in \mathcal{N} ^k}^{}{|\mathcal{N} _{i}^{k^{\prime}}|}$.}\\
		\textnormal{Obtain $\tilde{P}^k$ by applying $L_1$ regularization to $\boldsymbol{B}\times _3B_{\cdot k}^{\mathrm{\mathrm{layer}}}$ using $1^{\mathrm{T}}\cdot \left[ \boldsymbol{B}\times _3B_{\cdot k}^{\mathrm{\mathrm{layer}}} \right] $.}\\
		\textnormal{Obtain  $\hat{P}^k=\left[ \tilde{P}^k+\left( \tilde{P}^k \right) ^{\mathrm{T}} \right] /2$.}\\
	}
	\caption{Multi-layer Neighborhood Smoothing algorithm}\label{algo_MNS}
\end{algorithm}

\section{Theoretical properties}\label{2.4}

In this section, we establish the consistency  of the MNS estimator. To this end, we first introduce the piecewise Lipschitz condition for multi-layer graphon family.

\begin{definition}[Piecewise Lipschitz Multi-graphon Family]
	For any positive $\delta>0$, $L_n>0$, $L_K>0,$ denote $L=\left( L_n,L_K \right) ^{\mathrm{T}}$, let $\mathcal{F} _{\delta ;L}$ be a family of piecewise Lipschitz graphon functions $f:\left[ 0,1 \right] ^3\rightarrow \left[ 0,1 \right]$, which has the following properties:
	
	\vspace{0.1em}
	
	\begin{inparaenum}[(i)]
		\item  There exists an integer $R\ge 1$ and a sequence $0=x_0<\cdots <x_R=1$ satisfying $\min_{0\le r\le R-1} \left( x_{r+1}-x_r \right)\\ >\delta $. Additionally, there exists an integer $T\ge 1$ and a sequence $0=z_0<\cdots <z_T=1$ satisfying $\min_{0\le t\le T-1} \left( z_{t+1}-z_t \right) >\delta $; 
		
		\vspace{0.1em}
		
		\item There exists some constant $L_n > 0$ such that both $\left| f\left( u_1,v,w \right) -f\left( u_2,v,w \right) \right|\le L_n\left| u_1-u_2 \right|$ and $\left| f\left( u,v_1,w \right) -f\left( u,v_2,w \right) \right|\le L_n\left| v_1-v_2 \right|$ hold for all $u,u_1,u_2\in \left[ x_r,x_{r+1} \right) $ and $v,v_1,v_2\in \left[ x_s,x_{s+1} \right)$ where $0\le r,s\le R-1$; 
		
		\vspace{0.1em}
		
		\item  There exists some constant $L_K > 0$ such that $\left| f\left( u,v,w_1 \right) -f\left( u,v,w_2 \right) \right|\le L_K\left| w_1-w_2 \right|$ holds for $w,w_1,w_2\in \left[ z_t,z_{t+1} \right)$ where $0\le t\le T-1$.
	\end{inparaenum}
\end{definition}

Then for any $P^{ k }\in \mathbb{R} ^{n\times n}\left( k=1,\ldots,K \right) $, we have the following normalized Frobenius norm error rate bound. 

\begin{theorem}\label{theorem1}
	Assume $L_n$ and $L_K$ in $L$ are global constants, and $\delta =\delta (n,K)$ depends on $n$, $K$ satisfying ~\\ $\lim_{n,K\rightarrow \infty} \delta /\left\{ \left( nK \right) ^{-1}\log n \right\} ^{1/3}\rightarrow \infty $ and $K\lesssim  n^2\log n$, then the estimator $\tilde{P}$ defined in \eqref{equation::MNSestimator}, with neighborhood $\mathcal{N}^{k}$, $\mathcal{N} _{i}^{k^{\prime}}$defined in \eqref{equation::layer_neighbor} and \eqref{equation::MNSnode_neighbor}, where $h_1= \max\{C \left\{ \left( nK \right) ^{-1}\log n \right\} ^{1/3}, 1/K\}$ and $h_2=\max\{\left\{C \left( nK \right) ^{-1}\log n \right\} ^{1/3}, C^{3/2}(\log n/n)^{1/2}\}$ for any global constant $C > 0$ satisfies
	\begin{equation*}
		\max_{f\in \mathcal{F} _{\delta ;L}} \mathrm{pr}\left\{ \frac{1}{n^2}\left\| \tilde{P}^k-P^k \right\| _{F}^{2}\ge C_1\left( \frac{\log n}{nK} \right) ^{1/3}+C_2\left(\frac{\log n}{n}\right)^{1/2} \right\} \le n^{-\gamma}
	\end{equation*}
	for $\gamma>0$, where $C_1$ and $C_2$ are positive global constants. 
\end{theorem}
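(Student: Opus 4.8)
The plan is to condition on the latent variables $\{\xi_i\}_{i=1}^n$ and $\{\eta_k\}_{k=1}^K$ throughout and to split the error of \eqref{equation::MNSestimator} into a deterministic bias and a stochastic fluctuation. Writing $A_{i'jk'}=P_{i'jk'}+(A_{i'jk'}-P_{i'jk'})$ and abbreviating $N_i^k=\sum_{k'\in\mathcal N^k}|\mathcal N_i^{k'}|$, I would decompose
\[
\tilde P_{ijk}-P_{ijk}=\frac{\sum_{k'\in\mathcal N^k}\sum_{i'\in\mathcal N_i^{k'}}(P_{i'jk'}-P_{ijk})}{N_i^k}+\frac{\sum_{k'\in\mathcal N^k}\sum_{i'\in\mathcal N_i^{k'}}(A_{i'jk'}-P_{i'jk'})}{N_i^k},
\]
and bound $\frac1{n^2}\|\tilde P^k-P^k\|_F^2$ by twice the averaged squared bias plus twice the averaged squared fluctuation. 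All subsequent estimates are carried out on a \emph{good event} $\mathcal E$, on which the data-driven neighborhoods $\mathcal N^k$ and $\mathcal N_i^{k'}$ agree with oracle neighborhoods built from the true distances; the final probability $n^{-\gamma}$ is the complement of $\mathcal E$, obtained by a union bound over the $O(n^2K)$ concentration events.

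The core technical input is two uniform concentration lemmas controlling the estimated distances. For the node distances I would show that, within each layer $k'$, the statistic $\tilde d_{k'}^2(i,i')$ appearing in \eqref{equation::MNSnode_neighbor} concentrates around an upper bound of $\frac1n\|P_{i\cdot}^{k'}-P_{i'\cdot}^{k'}\|_2^2$ with error of order $(\log n/n)^{1/2}$, uniformly over $i,i',k'$; this is the multilayer analogue of the key lemma of \citet{zhang2017estimating} and follows from Bernstein's inequality applied to $\langle A_{i\cdot}^{k'}-A_{i'\cdot}^{k'},A_{s\cdot}^{k'}\rangle/n$ together with a union bound. For the layer distances I would show, using the cross-layer independence exploited in \eqref{equation::original_distance}, that $\tilde d^2(k,k')$ in \eqref{equation::layer_distance} concentrates around $\max_{\ell\ne k,k'}|\langle P^k-P^{k'},P^\ell\rangle|/n^2$ with error of order $(\log K)^{1/2}/n$; the Bernoulli noise enters only at scale $1/n$ because each inner product is an average of $n^2$ independent bounded entries. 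Combining the latter with the self-inner-product approximation described after \eqref{equation::original_distance}—replacing $\langle f_k,f_k\rangle$ by $\langle f_k,f_{\tilde k}\rangle$ for the nearest layer $\tilde k$—shows that $\tilde d^2(k,k')$ is, up to the grid error $e_K$, an upper bound for the realized layer distance $\frac1{n^2}\|P^k-P^{k'}\|_F^2$.

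Next I would translate these bounds into neighborhood quality using that $\xi_i$ and $\eta_k$ are i.i.d. uniform. Standard order-statistic estimates give that the $h_2$-fraction of nodes (resp. $h_1$-fraction of layers) closest in latent position lie within $|\xi_i-\xi_{i'}|\lesssim h_2$ (resp. $|\eta_k-\eta_{k'}|\lesssim h_1$), and that the nearest-layer gap satisfies $e_K\lesssim(\log K)/K$, which the choice $h_1\ge 1/K$ keeps negligible. Feeding these into the piecewise-Lipschitz property—and using the assumption $\delta\gg\{(nK)^{-1}\log n\}^{1/3}$ to guarantee that only a vanishing fraction of nodes and layers fall in a different Lipschitz piece than their neighbors—yields that every selected neighbor obeys $\frac1n\|P_{i\cdot}^{k'}-P_{i'\cdot}^{k'}\|_2^2\lesssim h_2^2+(\log n/n)^{1/2}$ and $\frac1{n^2}\|P^k-P^{k'}\|_F^2\lesssim h_1^2+e_K+(\log K)^{1/2}/n$. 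A Cauchy--Schwarz bound on the averaged squared bias then gives $\frac1{n^2}\sum_{ij}(\text{bias})^2\lesssim h_1^2+h_2^2+(\log n/n)^{1/2}$, and substituting the prescribed $h_1,h_2$ bounds this by $C(\log n/(nK))^{1/3}+C(\log n/n)^{1/2}$.

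For the fluctuation term, conditional on $\mathcal E$ the summands $A_{i'jk'}-P_{i'jk'}$ are bounded, mean-zero and—across distinct $(i',k')$—independent, so a Bernstein bound gives averaged squared fluctuation of order $1/N_i^k\lesssim 1/(h_1h_2nK)$, which the choice $h_1\asymp h_2\asymp\{(nK)^{-1}\log n\}^{1/3}$ renders at most $(\log n/(nK))^{1/3}$. Assembling the bias and fluctuation bounds through the triangle inequality and taking a union bound over all concentration events then yields the stated rate with exceptional probability $n^{-\gamma}$, where the condition $K\lesssim n^2\log n$ ensures the layer-noise scale $(\log K)^{1/2}/n$ stays below $(\log n/n)^{1/2}$. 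I expect the main obstacle to be the coupling between the \emph{data-dependent} neighborhoods and the very noise being averaged in the fluctuation term: because $\mathcal N^k$ and $\mathcal N_i^{k'}$ are selected using the same adjacency tensor that appears in the numerator of \eqref{equation::MNSestimator}, the conditional-independence structure needed for Bernstein's inequality is not automatic and must be restored by working on the good event $\mathcal E$—on which the neighborhoods coincide with oracle sets depending only on $\{P^k\}$—and by controlling the small number of mis-selected neighbors separately. A secondary difficulty is propagating the self-inner-product approximation error $e_K$ and the Lipschitz-boundary effects through the bias without inflating the rate, which is precisely where the scaling assumptions on $\delta$ and $K$ are consumed.
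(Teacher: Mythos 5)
Your overall architecture (bias--variance decomposition, Bernstein concentration of surrogate distances, Lipschitz-plus-quantile control of neighborhood quality, union bounds) matches the paper's proof in spirit; indeed your per-layer treatment of node distances is closer to what Algorithm \ref{algo_MNS} actually computes than the paper's Lemma \ref{lemma1}, which analyzes a layer-averaged distance $\tilde D(i,i')$ and thereby gets the sharper $\{(nK)^{-1}\log n\}^{1/3}$ node-bias rate. However, there is a genuine gap in the regime $K \lesssim (n/\log n)^{1/2}$, which is exactly the regime responsible for the $C_2(\log n/n)^{1/2}$ term in the statement. In that regime the prescribed bandwidth is $h_1 = 1/K$, and your unified order-statistics argument asserts that selected layers lie within latent distance $\lesssim h_1$ of $\eta_k$ and that $e_K \lesssim (\log K)/K$ is kept negligible. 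Neither holds in any useful sense: for bounded $K$ the nearest \emph{other} layer sits at latent distance $\Theta(1)$ with probability bounded away from zero, so smoothing over it incurs a constant layer bias; even your own final substitution fails arithmetically, since with $h_1 = 1/K$ the term $h_1^2 = 1/K^2$ is a constant when $K = O(1)$, while the target bound $C_1(\log n/(nK))^{1/3} + C_2(\log n/n)^{1/2}$ tends to zero. The paper avoids this by an explicit case split: when $K < C^{-3/2}(n/\log n)^{1/2}$, the choices $h_1 = 1/K$ and $h_2 = C^{3/2}(\log n/n)^{1/2}$ are interpreted as forcing $\mathcal{N}^k = \{k\}$, i.e., no cross-layer smoothing at all, so the estimator coincides with the single-layer neighborhood-smoothing estimator, and the $(\log n/n)^{1/2}$ term is imported wholesale from Theorem 1 of Zhang et al.\ (2017). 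Your proof needs this dichotomy (or a genuine argument that cross-layer smoothing is harmless for small $K$, which is false for constant $K$); without it the claimed bound does not follow.

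Two smaller points. First, your claim that selected neighbors satisfy a \emph{quadratic} bias bound ($h_1^2 + h_2^2$) is not justified by the surrogate-distance argument: the layer distance \eqref{equation::layer_distance} and the node distance entering \eqref{equation::MNSnode_neighbor} estimate max-inner-product functionals, and converting these into Frobenius-type distances requires the reference-slice substitution $\langle f_k,f_k\rangle \approx \langle f_k,f_{\tilde k}\rangle$, which contributes an additive error \emph{linear} in the bandwidth; this is precisely what the paper's Lemmas \ref{lemma1} and \ref{lemma2} prove, namely bounds of the form $L\Delta + \mathrm{noise}$ rather than $L^2\Delta^2 + \mathrm{noise}$. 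This does not break your final bound here, because your per-layer noise floor $(\log n/n)^{1/2}$ dominates in the relevant regime, but the step as written is incorrect. Second, the coupling between the data-dependent neighborhoods and the noise being averaged, which you rightly flag as the main obstacle, is not actually resolved by the paper either: its Lemma \ref{lemma3} applies Bernstein's inequality to sums over $\mathcal{N}^k$ and $\mathcal{N}_i^{k'}$ as though these sets were fixed. So this concern, while legitimate, does not distinguish your proposal from the paper's own argument.
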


Theorem \ref{theorem1} establishes the convergence rate for the MNS estimators of the probability matrices in multi-layer networks. The convergence rate benefits from larger  $K$ (the number of layers) and  $n$ (the number of nodes). When $K\gtrsim \left( n/\log n \right) ^{1/2}$, the incorporation of auxiliary information from similar layers within the multi-layer network would enhance the estimation accuracy. In this case, the convergence rate is $O\left( \left\{ \log n/\left( nK \right) \right\} ^{1/3} \right)$, which is faster than the convergence rate $O\left( \left( \log n/n \right) ^{1/2} \right)$ derived by \citet{zhang2017estimating}. For the case when $K\lesssim \left( n/\log n \right) ^{1/2}$, the number of layer variables $\eta_k$'s become too sparse to effectively utilize information from neighboring layers and the information contained within a single-layer network is enough to guarantee a reliable and  effective estimation. In this case, we set $h_1=1/K$ and $h_2= C^{3/2}(\log n/n)^{1/2}$, meaning that only the information from the same layer is used to estimate $P^k$. Under this configuration, our method reduces to that of \citet{zhang2017estimating}. It is important to note that Theorem \ref{theorem1} requires the number of layers $K$ not to be excessively large such that $K\lesssim  n^2\log n$. This condition is met in the vast majority of practical scenarios. For example, in a network with $n=100$, the theorem holds as long as $K\le 46051$—a notably mild requirement. On the other hand, for the same network  with $n=100$, the proposed method can achieve a faster convergence rate than \citet{zhang2017estimating} once $K>4$. Therefore, the proposed method exhibits strong applicability in real-world settings. For the case where $K\gtrsim n^2\log n$, the following theorem shows that the convergence rate of the estimator no longer benefits from further increases in $K$.

\begin{theorem}\label{theorem2}
		Assume $L_n$ and $L_K$ in $L$ are global constants, and $\delta =\delta (n,K)$ depends on $n$, $K$ satisfying ~\\ $\lim_{n,K\rightarrow \infty} \delta /\left\{ \left( nK \right) ^{-1}\log n \right\} ^{1/3}\rightarrow \infty $ and $K>  n^2\log n$, then the estimator $\tilde{P}$ defined in \eqref{equation::MNSestimator}, with neighborhood $\mathcal{N}^{k}$, $\mathcal{N} _{i}^{k^{\prime}}$defined in \eqref{equation::layer_neighbor} and \eqref{equation::MNSnode_neighbor} and $h_1=h_2=C\left\{ \left( nK \right) ^{-1}\log n \right\} ^{1/3}$ for any global constant $C > 0$ satisfies
	\begin{equation*}
		\max_{f\in \mathcal{F} _{\delta ;L}} \mathrm{pr}\left\{ \frac{1}{n^2}\left\| \tilde{P}^k-P^k \right\| _{F}^{2}\ge \frac{\tilde{C}}{n^{1-\alpha}} \right\} \le n^{-\gamma}
	\end{equation*}
	for some $\gamma>0$, $\tilde{C}>0$ and any positive $\alpha<1$. 
\end{theorem}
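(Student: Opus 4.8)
The plan is to bound $n^{-2}\|\tilde P^k-P^k\|_F^2$ entrywise through a bias--variance decomposition performed conditionally on the latent positions $\{\xi_i\}$, $\{\eta_k\}$ and on the realized neighborhoods, and then to show that the resulting bound is of order $n^{-1}$ up to a power of $\log K$, which is absorbed by $\tilde C\,n^{-(1-\alpha)}$ for the given $\alpha$. Writing $\bar P_{ijk}$ for the conditional mean of the estimator, I would split
\begin{equation*}
	\tilde{P}_{ijk}-P_{ijk}=\bigl(\tilde{P}_{ijk}-\bar{P}_{ijk}\bigr)+\bigl(\bar{P}_{ijk}-P_{ijk}\bigr),\qquad \bar{P}_{ijk}:=\frac{\sum_{k'\in\mathcal{N}^k}\sum_{i'\in\mathcal{N}_i^{k'}}P_{i'jk'}}{\sum_{k'\in\mathcal{N}^k}|\mathcal{N}_i^{k'}|},
\end{equation*}
and control the stochastic and bias pieces on a high-probability event for the latent variables and the distance statistics. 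The structural fact that drives this regime is that $K>n^2\log n$ forces $h_2=C\{(nK)^{-1}\log n\}^{1/3}<C/n$, so each node neighborhood collapses to (at most) the single estimated-nearest node, while $|\mathcal{N}^k|\asymp h_1K\gtrsim n\log n$ layers are retained. The estimator therefore behaves, up to a negligible node correction, like a pure average of the $(i,j)$ entry over the many neighboring layers.

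I would first control the layer neighborhood, which now carries the smoothing. Reusing the concentration bound for $\langle A^k,A^\ell\rangle/n^2$ from the proof of Theorem~\ref{theorem1}, the statistic $\tilde{d}^2(k,k')$ of \eqref{equation::layer_distance} satisfies $\tilde{d}^2(k,k')\gtrsim\|f(\cdot,\cdot,\eta_k)-f(\cdot,\cdot,\eta_{k'})\|_2^2-o(1)$ while fluctuating at the noise scale $(\log K)^{1/2}/n$. A union bound over the $O(n^2K)$ indices then gives, with probability at least $1-n^{-\gamma}$, a one-sided exclusion: no layer whose true separation exceeds a constant multiple of the noise scale is admitted, so every $k'\in\mathcal{N}^k$ obeys $|\eta_{k'}-\eta_k|\lesssim(\log K)^{1/4}/n^{1/2}$. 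The condition $K>n^2\log n$ is exactly what makes this compatible with a large neighborhood: the $\eta$-interval of radius $(\log K)^{1/4}/n^{1/2}$ around $\eta_k$ already contains at least the $\asymp h_1K\gtrsim n\log n$ layers needed to meet the quantile quota, so $\mathcal{N}^k$ is simultaneously tight in $\eta$ (small layer bias) and large (strong variance reduction).

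For the stochastic term I would use that, conditionally on the latent positions and the neighborhoods, $\tilde{P}_{ijk}$ is an average of $N=\sum_{k'\in\mathcal{N}^k}|\mathcal{N}_i^{k'}|\gtrsim n\log n$ \emph{independent} Bernoulli variables (independence holding across both $i'$ and $k'$ for each fixed $j$), so its conditional variance is at most $1/(4N)\lesssim1/(n\log n)$; a Bernstein bound for $n^{-2}\sum_{i,j}(\tilde{P}_{ijk}-\bar{P}_{ijk})^2$ contributes a term of the same order. For the bias, combining the layer separation above with the Lipschitz property (iii) of $\mathcal{F}_{\delta;L}$---and discarding the asymptotically negligible fraction of entries that straddle one of the $R$ or $T$ discontinuities, controlled by $\delta\gg\{(nK)^{-1}\log n\}^{1/3}$---gives a layer contribution to $|\bar{P}_{ijk}-P_{ijk}|$ of order $(\log K)^{1/4}/n^{1/2}$. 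Adding the (negligible) node correction and squaring yields $n^{-2}\|\tilde{P}^k-P^k\|_F^2\lesssim(\log K)^{1/2}/n+1/(n\log n)\lesssim n^{-(1-\alpha)}$, the last step using $(\log K)^{1/2}\le n^{\alpha}$; collecting the exceptional-event probabilities through the union bound produces the stated $n^{-\gamma}$ tail. This also explains why no further increase in $K$ helps: the variance is already far below the bias, while the bias itself only grows (through $\log K$) as $K$ increases.

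The step I expect to be the main obstacle is the node side under the degenerate bandwidth $h_2<1/n$. Unlike Theorem~\ref{theorem1}, where $h_2$ is pinned at the floor $(\log n/n)^{1/2}$ and the node neighborhood is genuinely informative, here the single admitted node $i'$ is selected from the \emph{same} matrix $A^{k'}$ that supplies the averaged entry $A_{i'jk'}$, so the usual independence between neighbor selection and the smoothed value is lost within each layer. What must be shown is that this within-layer selection bias, after averaging over the $\asymp n\log n$ \emph{independent} layers, contributes at most $O(n^{-1}\,\mathrm{polylog}\,n)$ to each $\bar{P}_{ijk}$---so that it is dominated by the layer term---uniformly over all $n^2$ node pairs and all $K$ layers simultaneously. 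This uniform, cross-layer control of the degenerate node neighborhood, resting on the max-type concentration of $\tilde{d}$ at the $n^{-\gamma}$ level after an $O(n^2K)$ union bound, is what upgrades the pointwise estimates to the Frobenius-norm statement; once it is established, the remaining bias and variance bounds are routine.
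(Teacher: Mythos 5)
Your proposal takes a genuinely different route from the paper, but it contains gaps that prevent it from establishing the theorem as stated. The paper's own proof of Theorem \ref{theorem2} is a one-line corollary of Lemma \ref{lemma3}: the bound \eqref{Ksmall} holds for all $K\ge C^{-3/2}\sqrt{n/\log n}$, and once $K>n^2\log n$ the first term satisfies $\{\log n/(nK)\}^{1/3}\le 1/n\le n^{-(1-\alpha)}$, so the second term $\tilde C_3 n^{-(1-\alpha)}$ dominates. Crucially, that second term originates in Lemma \ref{lemma2}, where the Bernstein tolerance for the layer-distance concentration is fixed at $\epsilon=n^{-(1-C_3)}$ with a tail exponentially small in a power of $n$; this tolerance is free of $K$, and that is exactly what buys the $K$-free rate in the theorem. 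Your argument instead calibrates the layer-distance concentration at the noise scale $\sqrt{\log K}/n$ (forced by your union bound over layer pairs at polynomial-in-$n$ probability), and your final absorption step explicitly requires $(\log K)^{1/2}\le n^{\alpha}$. But Theorem \ref{theorem2} only lower-bounds $K$; it imposes no upper bound, so for $\log K\gg n^{2\alpha}$ your bound $(\log K)^{1/2}/n$ exceeds $n^{-(1-\alpha)}$ and the claimed inequality is not proved. Your own closing remark that the bias ``only grows (through $\log K$) as $K$ increases'' concedes the point: you are bounding the error by a quantity that increases with $K$, which is weaker than, and for very large $K$ incompatible with, the $K$-free bound in the statement.

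A second genuine gap is the step where you pass from ``layer $k'$ is admitted to $\mathcal{N}^k$'' to ``$|\eta_{k'}-\eta_k|\lesssim(\log K)^{1/4}/n^{1/2}$'' and then invoke the Lipschitz property (iii) pointwise to bound $|\bar P_{ijk}-P_{ijk}|$. This inference is invalid: the statistic \eqref{equation::layer_distance} only sees the slices $f(\cdot,\cdot,\eta_k)$ and $f(\cdot,\cdot,\eta_{k'})$, and the map $\eta\mapsto f(\cdot,\cdot,\eta)$ need not be injective (think of $f$ periodic or locally constant in its third argument), so a small estimated distance never certifies that the latent layer positions are close. What concentration actually gives is smallness of $\left\| f(\cdot,\cdot,\eta_k)-f(\cdot,\cdot,\eta_{k'}) \right\|_2$, which controls the aggregated squared bias $n^{-2}\sum_{i,j}(P_{ijk'}-P_{ijk})^2$ but not the entrywise bias; the paper's Lemma \ref{lemma2} is organized around precisely this aggregated quantity for this reason, and your bias step would need the same reorganization to survive. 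Finally, the conditional-independence claim behind your variance bound (that, given the neighborhoods, the averaged $A_{i'jk'}$ are independent Bernoullis) suffers from the same selection-bias problem you flag for the node side: $\mathcal{N}^k$ is itself a function of all the $A^\ell$, so conditioning on it distorts the law of the selected entries. The standard repair—bounds that are uniform over all candidate neighborhoods, as in the treatment of $J_1$ in Lemma \ref{lemma3}—is needed on the layer side as well, not only for the degenerate node neighborhoods you single out as the main obstacle.
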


Theorem \ref{theorem2} shows that when the number of layers becomes significantly large, such that $K\gtrsim n^2\log n$, the convergence rate reaches $O\left( n^{\alpha-1} \right)$, and the MNS estimator will no longer benefit from the increase of the number of layers. This occurs because when $K$ is sufficiently large, the layer-wise latent variables become adequately dense.  With such a massive number of networks, the information required to estimate $P^k$ via the smoothing across the layer becomes essentially perfect. The error term associated with $K$ becomes negligible. The rate $\frac{1}{n^{1-\alpha}}$ effectively corresponds to the estimation rate for a single graphon given $n$ nodes, which hit the "floor" of optimal rate for a single graphon $O\left( \log n/n \right)$ discussed in \citet{minimaxGao2015}. the limitation is no longer the lack of network layers $K$, but the granularity $n$ of the network itself. In the simulation studies, we also demonstrate that when $K$ becomes sufficiently large, the estimation error does not further decrease with increasing $K$.

\section{Numerical experiments}
\subsection{Simulations}

In this section, we conduct simulation studies to examine the empirical performance of the MNS estimators for multi-layer networks  in finite samples. To justify the broad applicability of our method, we generate the multi-layer networks from the four types of graphons listed in Table \ref{table:graphon} with different features, including low-rank, degree monotonicity and local structure. Their characteristics will be listed as follows. 

\begin{table}[htbp]
	\centering
	\caption{Synthetic multi-layer graphons}{	
		\begin{tabular}{cc}
			\toprule
			Graphon & Function $f(u,v,w)$\\
			 \hline
			1 & $m/\{(M+1)(w+1) \}$ if $u, v\in((m-1)/M, m/M)$,\\
			& $0.3/(M+1)$ otherwise; $M=\lfloor \log n\rfloor$\\
			2 & $\sin \{5\pi (u+v-w)+1 \}/2+0.5$\\
			3 & $1-[1+\exp \{15(0.8|u-v|)^{1/(1+w)}-0.1 \} ]^{-1}$\\
			4 & $(u^2+v^2)/3\cos \{w/(u^2+v^2) \}+0.15$\\
			\bottomrule
	\end{tabular}}\label{table:graphon}
\end{table}

\begin{itemize}
	\item  \textbf{Graphon 1}: This structure consists of $M=\lfloor \log n \rfloor$ blocks in each layer, each with distinct edge probabilities, while the probabilities across the blocks are generally low.
	
	\item  \textbf{Graphon 2}: This structure lacks monotonicity in node degrees, indicating that the degree distribution does not follow a consistent increasing or decreasing pattern.
	
	\item  \textbf{Graphon 3}: This is a diagonal dominant matrix, where the diagonal elements are significantly larger than the off-diagonal elements.
	
	\item  \textbf{Graphon 4}: This is a full-rank matrix characterized by structures at different scales, which is challenging for the  low-rank approximation methods.
\end{itemize}

\begin{figure}[h!]
	\centering
	\includegraphics[width=0.47\textwidth]{./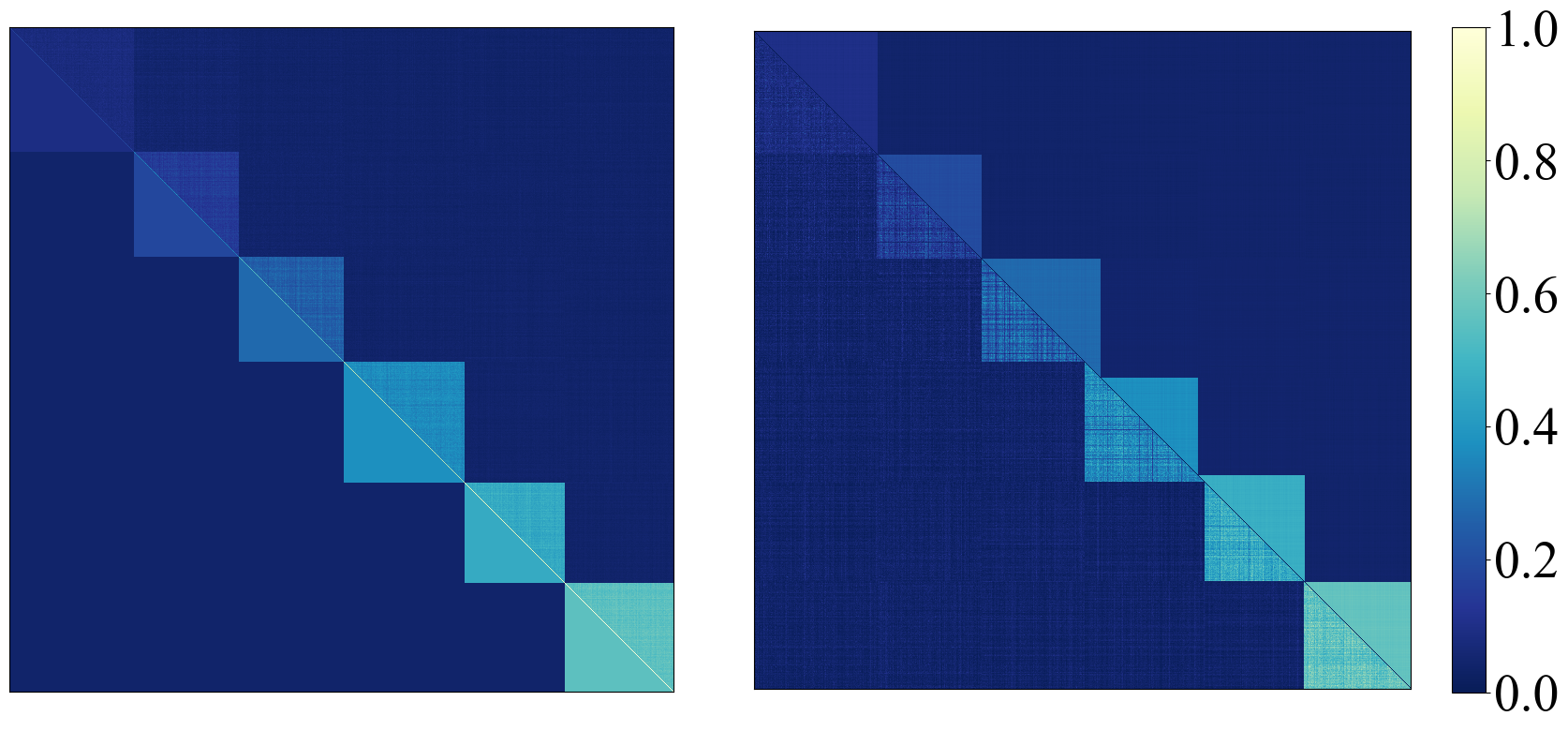} 
	\hspace{1em} 
	\includegraphics[width=0.47\textwidth]{./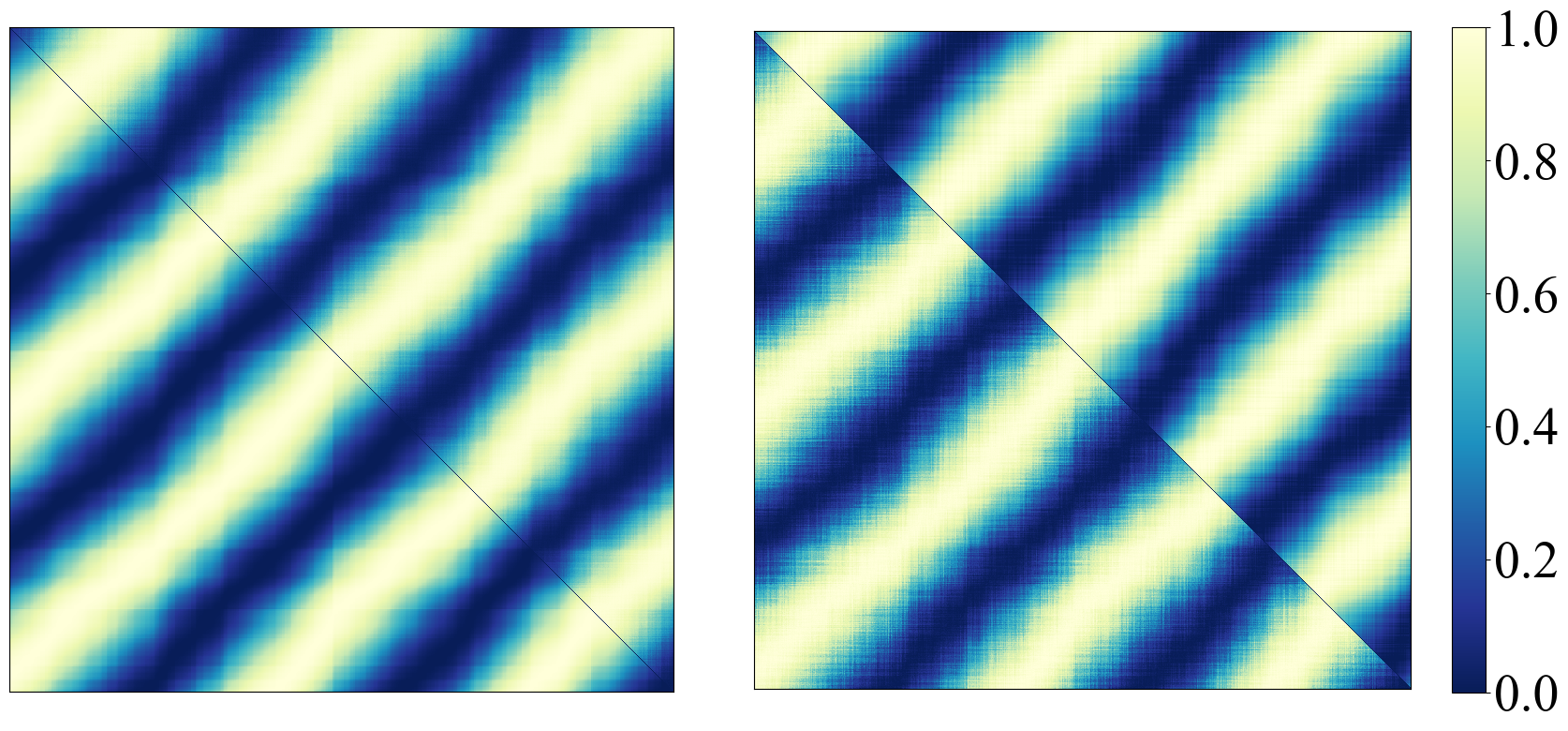} \\
	
	\vspace{0.5em} 
	
	\includegraphics[width=0.47\textwidth]{./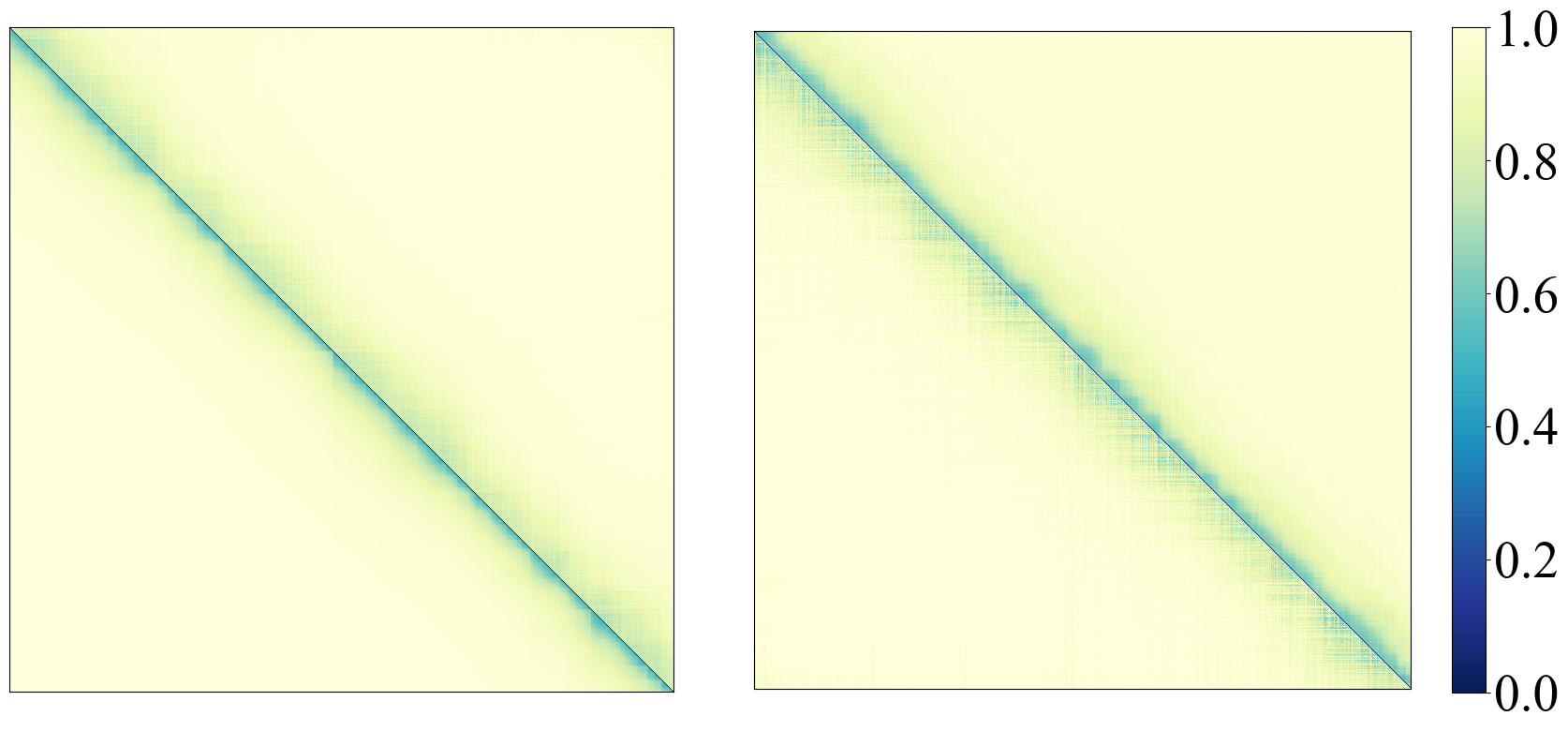} 
	\hspace{1em} 
	\includegraphics[width=0.47\textwidth]{./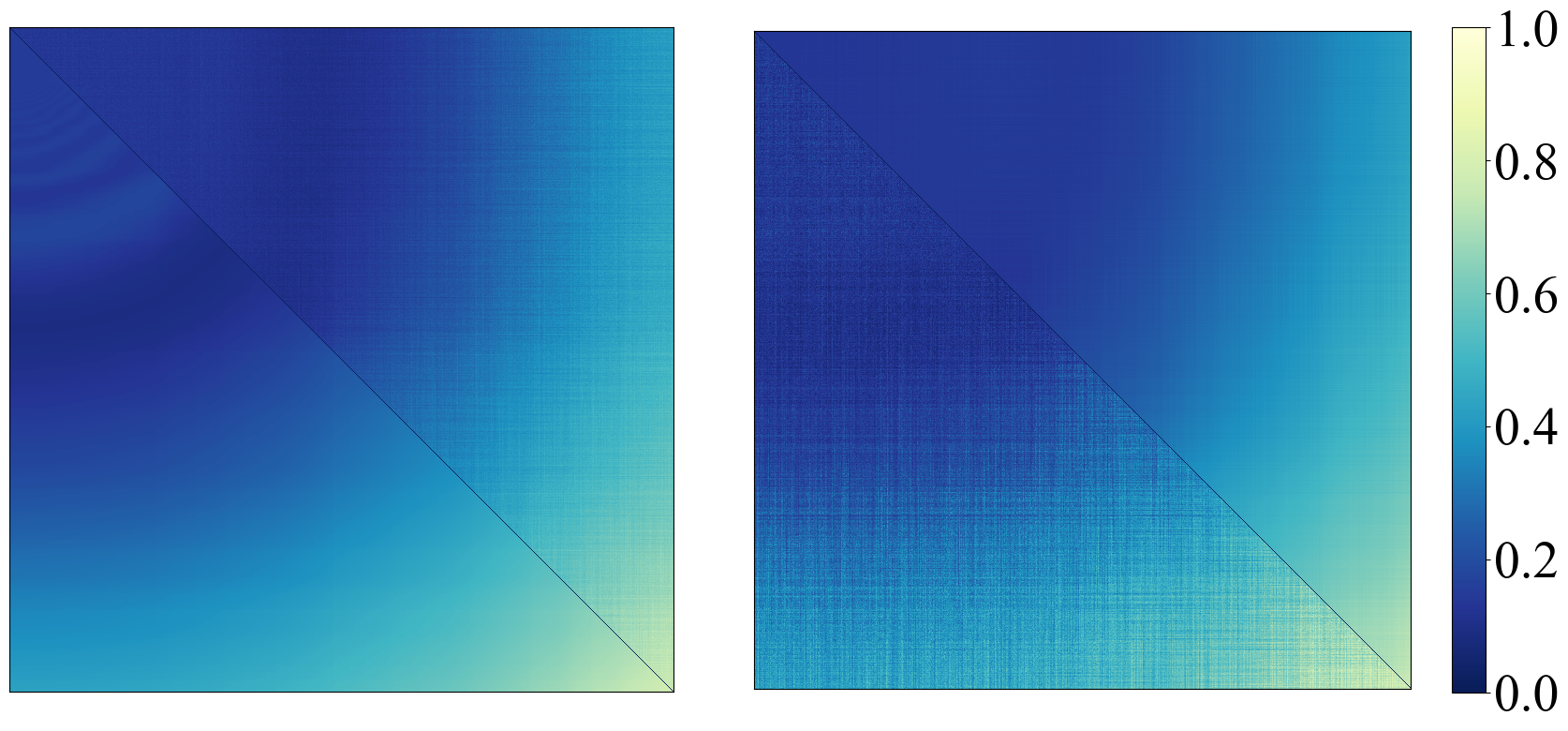} \\
	
	\caption{Estimated probability matrices for graphons 1–4, shown in 4 groups. For each group of images, The heatmap in the lower left corner shows the true $P$ values, while the one in the upper left displays estimates from our MNS method. The heatmaps on the right present estimates from the NS (lower triangle) method and the MultiNeSS method (upper triangle).}
	\hspace{1em} 
	\label{figure::heatmap}
\end{figure}

	
	
	

We fixed the number of network layers at 100 and compared the performances under difference node sizes with $n = 100, 200, 500, $ and 1000 to evaluate the impact of network scale on model fitting. For the choice of the constant factor $C$ in the bandwidth $h$ in Theorem \ref{theorem1}, we set $C =2$  for the rest of this paper, additional empirical results in the Supplementary Material show that our method is robust to the choice of the constant factor $C$. We compare the proposed method MNS with the neighborhood smoothing method for single-layer network (NS) proposed by \citet{zhang2017estimating} and the multiplex networks with shared structure model (MultiNeSS) proposed by \citet{macdonald2022latent}. For each method, we report the mean of root mean squared errors (RMSE) and mean absolute errors (MAE) with standard deviation in the bracket in  Table \ref{table::result}. All results are based on 200 replications. We present the heatmaps of the results for a single realization of the multi-layer network comprising 1000 nodes and 100 layers, specifically selecting the middle 50th layer for detailed analysis in Figure \ref{figure::heatmap}. The heatmaps for multi-layer networks at other representative positions are provided in the supplementary materials.

\begin{table}[htbp]
	\begin{center}
	\caption{Root mean squared errors and mean absolute errors with standard error (multiplied by 100). Each configuration is repeated 200 times for each multi-layer network.}
	\scalebox{0.95}{
		\begin{tabular}{ccccccc}
			\toprule
			\multicolumn{1}{p{4.235em}}{Node Size} &       & \multicolumn{1}{c}{} & 100   & 200   & 500   & 1000  \\
			\midrule
			\multicolumn{1}{c}{\multirow{6}[0]{*}{Graphon1}} & \multicolumn{1}{c}{\multirow{3}[0]{*}{RMSE}} & NS    & 8.46 (0.22)  & 5.93 (0.12)  & 3.75 (0.03) & 2.79 (0.01)  \\
			&       & MultiNeSS    &3.45 (0.28)	 &2.91 (0.20)	&2.36 (0.14)	&2.00 (0.10)  \\
			&       & \textcolor[rgb]{ 0,  0,  1}{MNS} & \textcolor[rgb]{ 0,  0,  1}{3.36 (0.13)} & \textcolor[rgb]{ 0,  0,  1}{2.56 (0.09)} & \textcolor[rgb]{ 0,  0,  1}{1.71 (0.04)} & \textcolor[rgb]{ 0,  0,  1}{1.19 (0.02)} \\
			\cmidrule{2-7}
			& \multicolumn{1}{c}{\multirow{3}[0]{*}{MAE}} & NS    & 5.88 (0.14) & 3.99 (0.08)  & 2.50 (0.02) & 1.90 (0.01)  \\
			&       & MultiNeSS    &2.07 (0.20)	&1.68 (0.14)	&1.39 (0.11)	&1.25 (0.10)\\
			&       & \textcolor[rgb]{ 0,  0,  1}{MNS} & \textcolor[rgb]{ 0,  0,  1}{2.29 (0.07)} & \textcolor[rgb]{ 0,  0,  1}{1.66 (0.05)} & \textcolor[rgb]{ 0,  0,  1}{1.07 (0.02)} & \textcolor[rgb]{ 0,  0,  1}{0.77 (0.01)} \\
			\midrule
			\multicolumn{1}{c}{\multirow{6}[0]{*}{Graphon2}} & \multicolumn{1}{c}{\multirow{3}[0]{*}{RMSE}} & NS    & 9.26 (0.22)  & 6.97 (0.12)  & 4.93 (0.06)  & 3.84 (0.04) \\
			&       & MultiNeSS    &7.17 (0.06)	&5.04 (0.05)	&3.18 (0.04)	&2.25 (0.01)\\
			&       & \textcolor[rgb]{ 0,  0,  1}{MNS} & \textcolor[rgb]{ 0,  0,  1}{5.35 (0.28)} & \textcolor[rgb]{ 0,  0,  1}{4.09 (0.20)} & \textcolor[rgb]{ 0,  0,  1}{2.99 (0.14)} & \textcolor[rgb]{ 0,  0,  1}{2.48 (0.12)} \\
			\cmidrule{2-7}
			& \multicolumn{1}{c}{\multirow{3}[0]{*}{MAE}} & NS    & 7.04 (0.15)  & 5.25 (0.08)  & 3.66 (0.04)  & 2.83 (0.03)  \\
			&       & MultiNeSS    &5.62 (0.05)	&3.96 (0.04)	&2.49 (0.03)	&1.76 (0.01)\\
			&       & \textcolor[rgb]{ 0,  0,  1}{MNS} & \textcolor[rgb]{ 0,  0,  1}{4.52 (0.20)} & \textcolor[rgb]{ 0,  0,  1}{3.44 (0.15)} & \textcolor[rgb]{ 0,  0,  1}{2.50 (0.11)} & \textcolor[rgb]{ 0,  0,  1}{2.07 (0.10)} \\
			\midrule
			\multicolumn{1}{c}{\multirow{6}[0]{*}{Graphon3}} & \multicolumn{1}{c}{\multirow{3}[0]{*}{RMSE}} & NS    & 8.93 (0.07)  & 6.89 (0.06)  & 5.04 (0.04)  & 3.99 (0.02)  \\
			&       & MultiNeSS    &6.76 (0.41)	&5.69 (0.31)	&4.57 (0.21)	&3.83 (0.19)\\
			&       & \textcolor[rgb]{ 0,  0,  1}{MNS} & \textcolor[rgb]{ 0,  0,  1}{4.92 (0.10)} & \textcolor[rgb]{ 0,  0,  1}{4.61 (0.05)} & \textcolor[rgb]{ 0,  0,  1}{3.86 (0.03)} & \textcolor[rgb]{ 0,  0,  1}{3.24 (0.03)} \\
			\cmidrule{2-7}
			& \multicolumn{1}{c}{\multirow{3}[0]{*}{MAE}} & NS    & 4.19 (0.11)  & 3.23 (0.08)  & 2.26 (0.05)  & 1.70 (0.03)  \\
			&       & MultiNeSS    &3.83 (0.32)	&3.49 (0.26)	&3.01 (0.17)	&2.59 (0.14)\\
			&       & \textcolor[rgb]{ 0,  0,  1}{MNS} & \textcolor[rgb]{ 0,  0,  1}{2.21 (0.07)} & \textcolor[rgb]{ 0,  0,  1}{1.90 (0.04)} & \textcolor[rgb]{ 0,  0,  1}{1.44 (0.02)} & \textcolor[rgb]{ 0,  0,  1}{1.13 (0.01)} \\
			\midrule
			\multicolumn{1}{c}{\multirow{6}[0]{*}{Graphon4}} & \multicolumn{1}{c}{\multirow{3}[0]{*}{RMSE}} & NS    & 10.43 (0.08)  & 7.97 (0.04)  & 5.65(0.02)   & 4.41 (0.01)  \\
			&       & MultiNeSS    &5.35 (0.32)	&4.58 (0.20)	&3.48 (0.11)	&2.79 (0.10)\\
			&       & \textcolor[rgb]{ 0,  0,  1}{MNS} & \textcolor[rgb]{ 0,  0,  1}{3.98 (0.13)} & \textcolor[rgb]{ 0,  0,  1}{3.28 (0.09)} & \textcolor[rgb]{ 0,  0,  1}{2.54 (0.06)} & \textcolor[rgb]{ 0,  0,  1}{2.06 (0.05)} \\
			\cmidrule{2-7}
			& \multicolumn{1}{c}{\multirow{3}[0]{*}{MAE}} & NS    & 8.29 (0.07)  & 6.34 (0.03)  & 4.49 (0.01)  & 3.50 (0.01)  \\
			&       & MultiNeSS    &4.61 (0.27)	&3.80 (0.18)	&2.83 (0.11)	&2.23 (0.04)\\
			&       & \textcolor[rgb]{ 0,  0,  1}{MNS} & \textcolor[rgb]{ 0,  0,  1}{3.21 (0.10)} & \textcolor[rgb]{ 0,  0,  1}{2.64 (0.08)} & \textcolor[rgb]{ 0,  0,  1}{2.04 (0.04)} & \textcolor[rgb]{ 0,  0,  1}{1.65 (0.03)} \\
			\bottomrule
		\end{tabular}
	}\label{table::result}
\end{center}
\end{table}%

Table \ref{table::result}  shows that the proposed MNS method has significant advantages in terms of estimating the probability matrices in all given network structures. Increasing the node sizes significantly lowers the estimation errors by incorporating  similarity information across layers.

We also conducted a series of simulations to evaluate the performance of the proposed method for different  size of layers. We fixed the number of nodes $n =500$ and varying the number of layers with $K = 10,20,50,100,200,\ldots,600$. Each configuration was repeated 100 times to ensure robust results. The results in Figure\ref{figure::simulation2} indicate that when the number of layers $K\leq10$, the proposed MNS method performs less effectively than the NS and MultiNesSS methods. However, after 20 layers, the error corresponding to the MNS method begins to fall below that of the other two methods and gradually decreases as the number of layers increases. This improvement stems from the richer neighborhood information provided by additional network layers, enabling the method to capture underlying network patterns more effectively. This also validates our theoretical assertion that when $K\gtrsim\left( n/\log n \right) ^{1/2}$, the incorporation of auxiliary information from similar layers within the multi-layer network would enhance the estimation accuracy.

\begin{figure}[htpb]
	\centering
	\includegraphics[width=0.8\textwidth,height=0.8\textwidth]{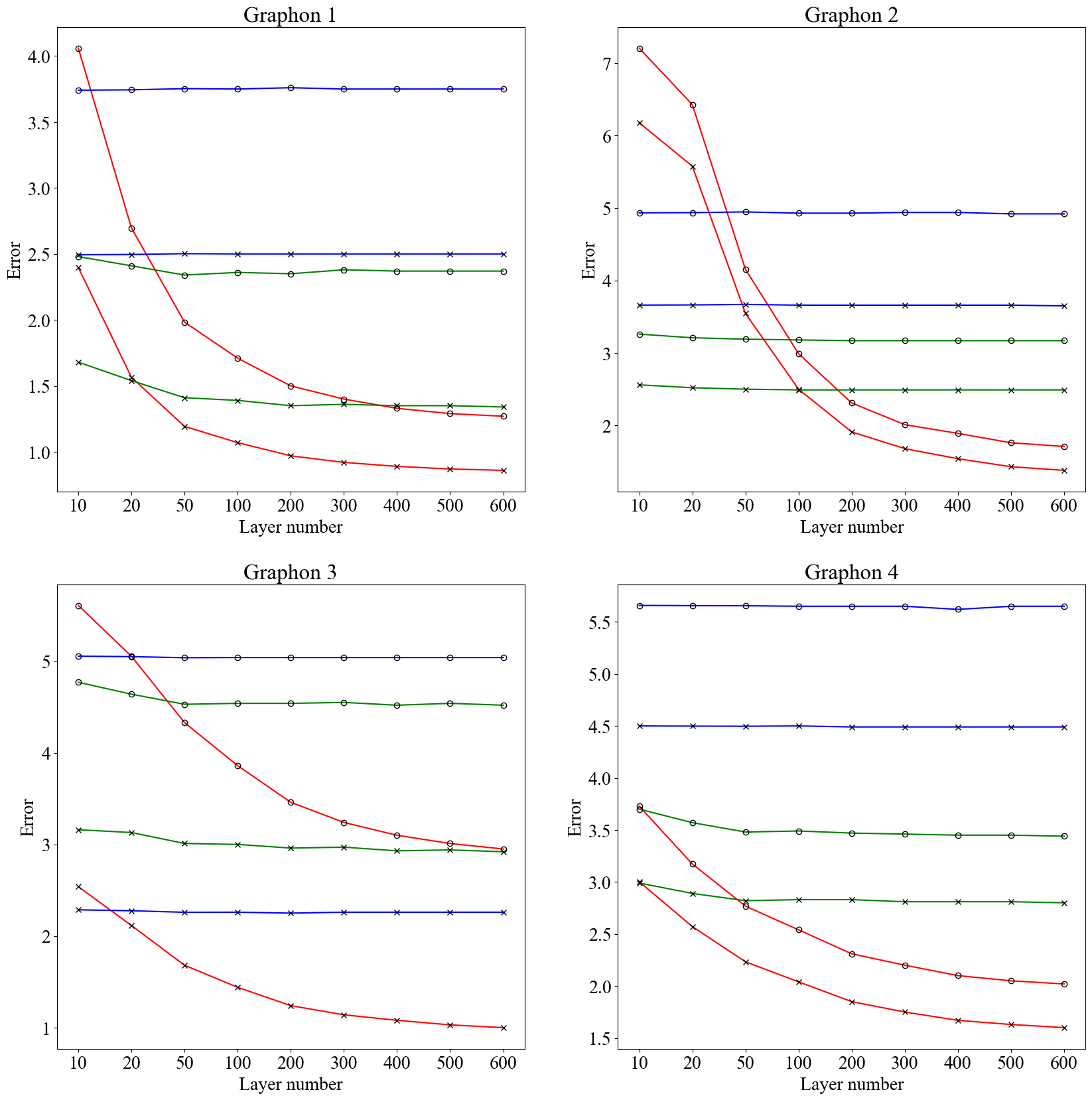} 
	\caption{Estimation errors (RMSE and MAE) for probability matrix. All results are multiplied by 100. Red line is for MNS method, green line is for MultiNeSS method and blue line is for NS. Lines marked with circles represent the changes in RMSE and lines marked with cross represent the changes in MAE.}
	\label{figure::simulation2}
\end{figure}

Finally, we fixed a small number of nodes $n=20$ and selected several larger layer depths $K$ ($K\gtrsim n^2\log n$). For each setting, we ran our MNS method and calculated the average RMSE and MAE after 100 experiments and plotted the results in the Figure \ref{fig:biglayer}. This experimental section is conducted because, as explained in Theorem \ref{theorem1}, when $K\gtrsim n^2\log n$, our convergence rate reaches $O(n^{c_2-1})$. This indicates that the error no longer decreases with increasing layer depth $K$. Our experiments also reveal that at very high layer depths, the error fluctuates only slightly within experimental error margins as the number of layers increases, further validating the theoretical correctness.

\begin{figure}[htbp]
    \centering
    \begin{minipage}{0.45\textwidth}
        \centering
        \includegraphics[width=\textwidth]{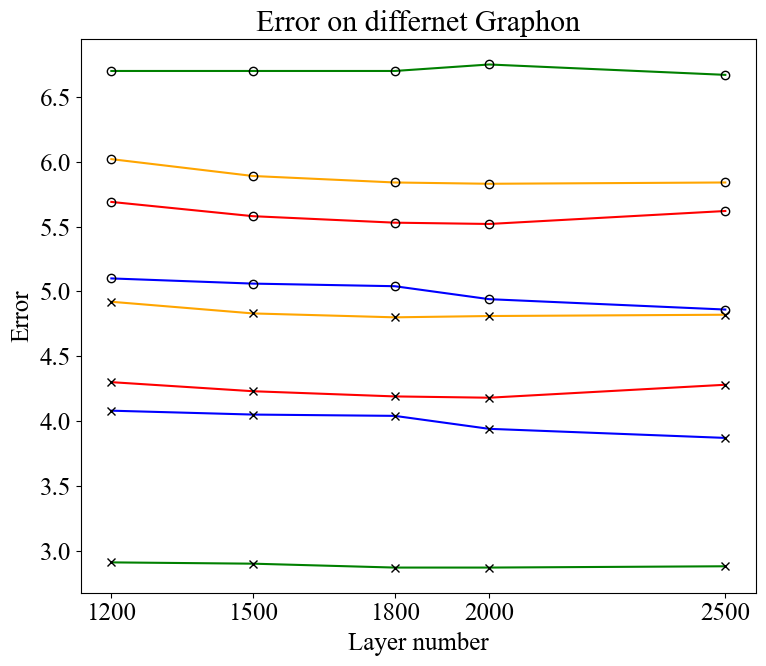}
        \caption{Estimation errors (RMSE and MAE) for probability matrix of MNS method. All results are multiplied by 100. Red line is for Graphon 1; blue line is for Graphon 2; green line is for Graphon 3 and orange line is for Graphon 4. Lines marked with circles represent the changes in RMSE and Lines marked with cross represent the changes in MAE.}
        \label{fig:biglayer}
    \end{minipage}
    \hfill
    \begin{minipage}{0.45\textwidth}
        \centering
        \includegraphics[width=\textwidth]{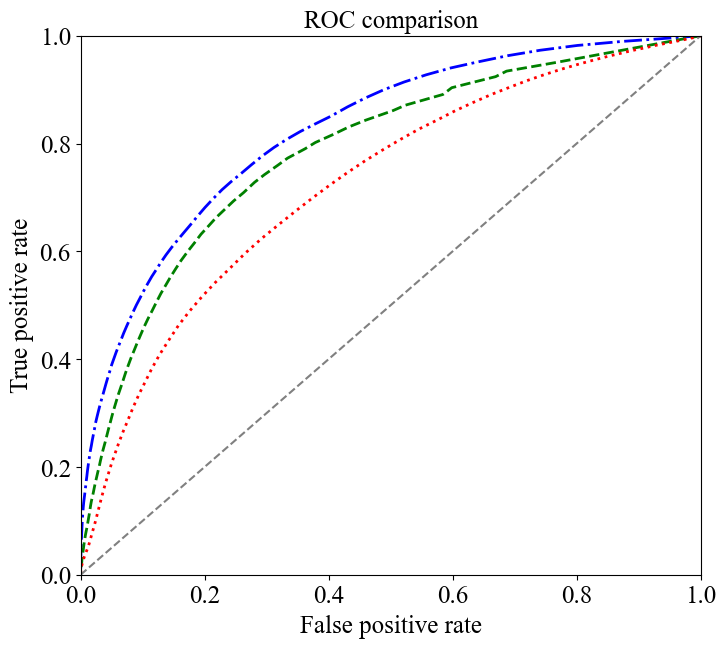}
        \caption{Receiver operating characteristic curves for link prediction on the overall FAO network; 10\% of edges are missing at random. The blue dot-dash curve is for our method (MNS), the  green dashed curve for NS and the red dotted line for MultiNeSS. The AUC values for the MNS, NS and MultiNeSS methods were 0.82, 0.78 and 0.72 respectively.}
        \label{fig:ROC}
    \end{minipage}
\end{figure}

\subsection{Real-world data analysis}

In this section, we apply the proposed MNS method to analyze the 2010 Worldwide Food Import/Export Network dataset \citep{dataset2010} from the Food and Agriculture Organization of the United Nations (\href{http://www.fao.org}{http://www.fao.org}). The dataset contains 364 networks among 214 countries with a total of 318,346 edges, where each network captures the trading connections of a specific food product among countries. 

We first preprocessed the data to select the nodes corresponding to the trading countries that are most relevant. Note that several major countries dominate the world economy activity and account for a substantial  number of trading connectivity, while the other countries with limited agricultural product categories have fewer trading relationships with each other  for  specific product networks. Therefore we focus our study on partial trading networks comprising of major countries whose corresponding degrees of nodes are larger than 9, which results in a subset of 51 countries that exert significant economic influence worldwide including the United States, mainland China, Japan and some European countries. 

For analyzing the trade networks of 364 commodities among these 52 countries, we further calculate the number of existing edges within each network, and observe that certain commodities exhibit relatively sparse transactions between major trading partners, leading to fewer observable edges, with many zero elements in those adjacency matrices. Notably, some networks have no edges, which make it impossible to identify the neighborhood sets for nodes. Such low observability in certain networks   make it extremely challenging for joint estimation of edge probabilities across multiple networks, leading to unreliable results. To address this issue, we selected the networks with a minimum of 300 non-zero elements in their adjacency matrices, resulting in a multi-layer network dataset comprising of 279 layers. This scheme ensures an adequate sample size and maintains a sufficiently large network scale.

Since the true probability matrix is unknown, it is challenging to directly examine and compare the results of different methods on real-world networks. Following the strategy proposed by \citet{zhang2017estimating}, we evaluate the empirical performance of different estimation methods through the link prediction task, which measures the practical utility of the estimated probability matrices. Specifically, we assume that $A$ is the true adjacency matrix and $A_{\mathrm{obs}}$ is the observed version, where each true edge is independently removed with probability $p$. Formally, this can be expressed as $A_{\mathrm{obs}} = M * A$, where $M_{ij}$ are independent Bernoulli$\left(1-p\right)$ random variables. The estimation $\hat{P}$ is then obtained based on $A_{\mathrm{obs}}$.

In contrast to \citet{zhang2017estimating}, which only considers single-layer networks, we evaluate link prediction performance in multi-layer settings. To assess the overall performance across all network layers, we generate corresponding $A_{obs}$ for each layer and define the false positive rate (FP) and true positive rate (TP) as follows:
\begin{equation}\label{equation::ROC2}
\begin{aligned}
\mathrm{FP(}t)=\sum_{i,j,k}{\mathbb{I}}\left( \hat{P}_{ijk}>t,A_{ijk}=0,M_{ijk}=0 \right) /\sum_{i,j,k}{\mathbb{I}}\left( A_{ijk}=0,M_{ijk}=0 \right),\\
\mathrm{TP(}t)=\sum_{i,j,k}{\mathbb{I}}\left( \hat{P}_{ijk}>t,A_{ijk}=1,M_{ijk}=0 \right) /\sum_{i,j,k}{\mathbb{I}}\left( A_{ijk}=1,M_{ijk}=0 \right).
\end{aligned}
\end{equation}
By varying the threshold $t$, we obtain the receiver operating characteristic (ROC) curve and compute the corresponding area under the curve (AUC), which summarizes the overall link prediction accuracy. In our experiments, 10\% of the edges are randomly removed from the FAO multi-layer trade networks, and the estimation procedures are repeated ten times to obtain averaged FP and TP rates.

Figure \ref{fig:ROC} presents the aggregated ROC curves across all layers, and the resulting AUC values are 0.82 for the proposed MNS method, 0.78 for NS method and 0.72 for the MultiNeSS method. The superior AUC of the MNS method demonstrates its advantage in capturing cross-layer dependencies, leading to more accurate probability estimation and improved link prediction performance on real-world multi-layer networks.

\section{Proofs}

	In this section, we provide the detailed proof of Theorem \ref{theorem1} and Theorem \ref{theorem2}. For convenience, we start with recalling notations and assumptions made in the main paper. Let $0=x_0<x_1<...<x_R=1$, $I_r:=\left[ x_{r-1},x_r \right) $ for $1\le r\le R-1$; $I_R=\left[ x_{R-1},x_R \right] $ and let $0=z_0<z_1<...<z_T=1$, $H_z:=\left[ z_{t-1},z_t \right)$ for $1\le z\le T-1$ ; $H_T=\left[ z_{T-1},z_T \right] $. Assume the graphon $f$ is the Multi-layer Lipschitz function on each of $I_r\times I_s\times H_t$ for $1\le r,s\le R$, $1\le t\le T$. Let $L_n$ and $L_K$ denote the maximum Lipschitz constant.
	\begin{assumption}
		The number of pieces $R$ and $T$ may grow with $n$ and $K$, as long as $\min_r \left| I_r \right|/\left\{ \left( nK \right) ^{-1}\log n \right\} ^{1/3}$, $\min_t \left| H_t \right|/\left\{ \left( nK \right) ^{-1}\log n \right\} ^{1/3}\rightarrow \infty $.
	\end{assumption}
	
	For any $\xi \in [0,1]$, let $I(\xi )$ denote the $I_r$ that contains $\xi $. Let $S_i(\Delta )=\left[ \xi _i-\Delta ,\xi _i+\Delta \right] \cap I\left( \xi _i \right) $ denote the neighborhood of $\xi _i$ in which $f(x,y,z)$ is Lipschitz in $x\in S_i(\Delta )$ for any fixed $y$, $z$,  or $y\in S_i(\Delta )$ for any fixed $x$, $z$. For any $\eta \in [0,1]$, let $H(\eta )$ denote the $H_z$ that contains $\eta $. Let $S_k(\Delta )=\left[ \eta _k-\Delta ,\eta _k+\Delta \right] \cap H\left( \eta _k \right) $ denote the neighborhood of $\eta _k$ in which $f(x,y,z)$ is Lipschitz in $z\in S_k(\Delta )$ for any fixed $x$ and $y$.

We first consider the case where $K\ge C^{-3/2}\sqrt{\frac{n}{\log n}}$ so that $C\left\{ \left( nK \right) ^{-1}\log n \right\} ^{1/3}\ge 1/K$. Suppose we select the neighborhood $\mathcal{N}_{i}^{k}$ of node $i$ in the $k$th network by thresholding at the lower $h_1$th quantile of $\{d(i,k)\}_{k\ne i}$ and  select the neighborhood of the $k$th network $\mathcal{N} ^{k}$ by thresholding at the lower $h_2$th quantile of $\{d(i,k)\}_{k\ne i}$ respectively and set $h_1=h_2=C\left\{ \left( nK \right) ^{-1}\log n \right\} ^{1/3}$. 
Recall our estimator is defined as
$$
\tilde{P}_{ijk}=\frac{\sum\nolimits_{k^{\prime}\in \mathcal{N} ^k}^{}{\sum\nolimits_{i^{\prime}\in \mathcal{N} _{i}^{k^{\prime}}}^{}{A_{i^{\prime}jk^{\prime}}}}}{\sum\nolimits_{k^{\prime}\in \mathcal{N} ^k}^{}{\left| \mathcal{N} _{i}^{k^{\prime}} \right|}}.
$$

We denote $M=\sum\nolimits_{k^{\prime}\in \mathcal{N} ^k}^{}{\left| \mathcal{N} _{i}^{k^{\prime}} \right|}$ and  perform a bias-variance decomposition:
\begin{align}\label{decomposition}
	&\frac{1}{n^2}\sum\nolimits_{i,j}^{}{\left( \tilde{P}_{ijk}-P_{ijk} \right) ^2}=\frac{1}{n^2}\sum_{i,j}{\left\{ \frac{\sum\nolimits_{k^{\prime}\in \mathcal{N}^k}^{}{\sum\nolimits_{i^{\prime}\in \mathcal{N}_{i}^{k^{\prime}}}^{}{\left( A_{i^{\prime}jk^{\prime}}-P_{ijk} \right)}}}{\sum\nolimits_{k^{\prime}\in \mathcal{N}^k}^{}{\left| \mathcal{N}_{i}^{k^{\prime}} \right|}} \right\} ^2}    \nonumber   \\
	=&\frac{1}{n^2}\sum_{i,j}{\left[ \frac{\sum\nolimits_{k^{\prime}\in \mathcal{N}^k}^{}{\sum\nolimits_{i^{\prime}\in \mathcal{N}_{i}^{k^{\prime}}}^{}{\left\{ \left( A_{i^{\prime}jk^{\prime}}-P_{i^{\prime}jk^{\prime}} \right) +\left( P_{i^{\prime}jk^{\prime}}-P_{ijk^{\prime}} \right) +\left( P_{ijk^{\prime}}-P_{ijk} \right) \right\}}}}{\sum\nolimits_{k^{\prime}\in \mathcal{N}^k}^{}{\left| \mathcal{N}_{i}^{k^{\prime}} \right|}} \right] ^2}   \nonumber   \\
	\le &\frac{3}{n^2\left( \sum\nolimits_{k^{\prime}\in \mathcal{N}^k}^{}{\left| \mathcal{N}_{i}^{k^{\prime}} \right|} \right) ^2}\sum_{i,j}{\left\{ J_1\left( i,j,k \right) +J_2\left( i,j,k \right) +J_3\left( i,j,k \right) \right\}} 
	\nonumber  \\
	=&\frac{3}{n^2M^2}\sum_{i,j}{\left\{ J_1\left( i,j,k \right) +J_2\left( i,j,k \right) +J_3\left( i,j,k \right) \right\}}.
\end{align}

According to Cauchy-Schwarz inequality, we have $
J_1\left( i,j,k \right) =\left( \sum_{k^{\prime},i^{\prime}}{\left( A_{i^{\prime}jk^{\prime}}-P_{i^{\prime}jk^{\prime}} \right)} \right) ^2$,\\$J_2\left( i,j,k \right) =\left( \sum_{k^{\prime},i^{\prime}}{\left( P_{i^{\prime}jk^{\prime}}-P_{ijk^{\prime}} \right)} \right) ^2$and $J_3\left( i,j,k \right) =\left( \sum_{k^{\prime},i^{\prime}}{\left( P_{ijk^{\prime}}-P_{ijk} \right)} \right) ^2$,where $k^{\prime}\in \mathcal{N} ^k,i^{\prime}\in \mathcal{N} _{i}^{k^{\prime}}$. Thus, our goal is to bound \eqref{decomposition}.

Next we give two  Lemmas which are critical for our further theoretical analysis to $J_2\left( i,j,k \right)$ and $J_3\left( i,j,k \right)$.
Recall that we used a measure of closeness of adjacency matrix slices for nodes as 
\begin{equation*}
	\tilde{d}^2\left(i, i^{\prime}\right)=\max _{k \neq i, i^{\prime}}\left|\left\langle A_i-A_{i^{\prime}}, A_{k \cdot}\right\rangle\right| / n=\max _{k \neq i, i^{\prime}}\left|\left(A^2 / n\right)_{i k}-\left(A^2 / n\right)_{i^{\prime} k}\right|.
\end{equation*} 
We define a new distance measure in order to proof conveniently.
\begin{equation*}
	\tilde{D}\left( i,i^{\prime} \right) =\frac{1}{\left| \mathcal{N} ^k \right|}\sum\nolimits_{k^{\prime}\in \mathcal{N} ^k}^{}{\tilde{d}\left( i,i^{\prime} \right) =\frac{1}{n\left| \mathcal{N} ^k \right|}\underset{j\ne i,i^{\prime}}{\max}\left| \sum_{\ell ,k^{\prime}\in \mathcal{N} ^k}{\left( A_{i\ell k^{\prime}}A_{\ell jk^{\prime}}-A_{i^{\prime}\ell k^{\prime}}A_{\ell jk^{\prime}} \right)} \right|}.
\end{equation*} 
\begin{lemma}\label{lemma1}
	Let $\tilde{C_0},\tilde{C_2}>0$ be arbitrary global constants and assume $n$ and $K$ are large enough, so that 
	\[
	\left\{ \left( \tilde{C_2}+8 \right) \left( nK \right) ^{-1}\log n \right\} ^{1/3} \le 1, \text{and } K\ge C^{-3/2}\sqrt{\frac{n}{\log n}}.
	\]
	Then, 
	\begin{enumerate}
		\item With probability $1-2n^{-\tilde{C_2}/4}$, for all $i$ and $i^{\prime}\in \mathcal{N} _i$, we have
		\begin{equation*}
			\sum_{j,k^{\prime}\in N^k}{\left( P_{i^{\prime}jk^{\prime}}-P_{ijk^{\prime}} \right)}\le n\left| \mathcal{N} ^k \right|\left[ \left\{ 6\tilde{C}_0L_n+8\left( \tilde{C_2}+8 \right) ^{1/3} \right\} \left( \frac{\log n}{nK} \right) ^{1/3}+\frac{32}{n} \right] .
		\end{equation*}
		\item With probability $1-2n^{-\tilde{C_2}/4}$, for all $i$ and all $i^{\prime}\in \mathcal{N} _i$ simultaneously, we have
		\begin{equation*}
			\tilde{D}\left( i,i^{\prime} \right) \le \left\{ \tilde{C}_0L_n+2\left( \tilde{C_2}+8 \right) ^{1/3} \right\} \left( \frac{\log n}{nK} \right) ^{1/3}+\frac{8}{n}.
		\end{equation*}      
	\end{enumerate}
\end{lemma}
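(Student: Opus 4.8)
The plan is to adapt the single-layer neighborhood-smoothing analysis of \citet{zhang2017estimating} to the layer-aggregated distance $\tilde D$, exploiting the fact that averaging over the $|\mathcal N^k|$ neighboring layers sharpens every concentration step from the single-layer rate $(\log n/n)^{1/2}$ down to $\rho:=\{(nK)^{-1}\log n\}^{1/3}$. Both parts will be proven on one common high-probability event, and since the two displays share the constants $\tilde C_0$ (governing the occupancy of the latent positions) and $\tilde C_2$ (governing the concentration deviation), I would construct that event once and reuse it, reading off the $2n^{-\tilde C_2/4}$ failure probability from the union bounds.

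First I would establish the concentration underlying Part 2. Fix $i,i'$ and the test index $j$, and split the inner sum $\sum_{\ell,k'\in\mathcal N^k}(A_{i\ell k'}-A_{i'\ell k'})A_{\ell jk'}$ into its conditional mean given the latent variables $(\xi,\eta)$ plus a centred fluctuation. The fluctuation is a sum of $\asymp n|\mathcal N^k|$ products of Bernoulli entries that are independent across distinct node-pairs and across layers, each bounded by $1$ with variance $\le 1$, so a Bernstein inequality yields a deviation of order $\{\log n/(n|\mathcal N^k|)\}^{1/2}$ after dividing by $n|\mathcal N^k|$. A union bound over the $\le n^3$ triples $(i,i',j)$ is absorbed into the $\log n$ numerator, and substituting $|\mathcal N^k|\asymp h_1K\asymp K^{2/3}(\log n/n)^{1/3}$ collapses the deviation to exactly $\rho=\{(nK)^{-1}\log n\}^{1/3}$, which is the source of the $(\tilde C_2+8)^{1/3}$ term. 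The coincident-index contributions, i.e.\ $\ell\in\{i,i',j\}$, are not of the independent-product form; collecting them separately and normalizing by $n|\mathcal N^k|$ produces the additive $8/n$ correction.

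Next I would control the population part, which supplies the $\tilde C_0L_n$ term and drives the selection argument. An occupancy (binomial) bound on the i.i.d.\ uniform latent positions shows that, with high probability and simultaneously for every node $i$, at least an $h_2$-fraction of indices $i'$ satisfy $|\xi_i-\xi_{i'}|\le \tilde C_0 h_2$ while staying in the same Lipschitz piece $I(\xi_i)$, the latter being guaranteed by Assumption~1 since $\delta/\rho\to\infty$ while $h_2\asymp\rho$. For such true neighbors the Lipschitz property in the first argument gives $|P_{i\ell k'}-P_{i'\ell k'}|\le L_n|\xi_i-\xi_{i'}|\lesssim L_n\rho$, so their (aggregated) distance is $O(L_n\rho)$; combined with the concentration step this forces the quantile threshold $q_i(h_2)$ to be $O(\rho)$, and hence every selected neighbor satisfies the $\tilde D$-bound of Part 2. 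For Part 1 I would then convert a distance bound into the signed probability-difference bound using the self-inner-product-via-cross-inner-product device described above the statement: replacing the intractable $\langle P_{i\cdot k'},P_{i\cdot k'}\rangle$ by $\langle P_{i\cdot k'},P_{\tilde i\cdot k'}\rangle$ for a reference $\tilde i$ with $\xi_{\tilde i}\approx\xi_i$, smallness of $\tilde D$ forces $\tfrac1n\|P_{i\cdot k'}-P_{i'\cdot k'}\|$ to be small on average over $k'\in\mathcal N^k$, and the accumulated Lipschitz errors along the reference chains yield the factor $6\tilde C_0L_n$ together with the $32/n$ term.

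The main obstacle will be the data-dependence of the neighbourhoods: both $\mathcal N^k$ and $\mathcal N_i^{k'}$ are selected from the very adjacency tensor that appears inside $\tilde D$ and inside the averaged estimator, so the $\asymp n|\mathcal N^k|$ ``samples'' are not independent of the selection event. I would handle this exactly as in the single-layer proof, by first proving the Bernstein concentration uniformly over all index pairs and triples, so that it holds whatever random set is chosen later, and only afterwards invoking the quantile rule, while conditioning on a good event for the layer set $\mathcal N^k$ supplied by the companion layer-level argument. A secondary delicate point is passing from the $\ell_2$-type control that the distance naturally delivers to the signed $\ell_1$-type sum in Part 1 without degrading $\rho$ to $\rho^{1/2}$; here it is the exact cancellation built into the cross-inner-product identity, rather than a crude Cauchy--Schwarz step, that preserves the rate.
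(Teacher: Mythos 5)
Your proposal matches the paper's proof essentially step for step: the same Bernstein-plus-union-bound concentration for the cross-product terms (with the coincident-index terms split off to yield the additive $O(1/n)$ corrections), the same transfer of the distance bound from true geometric neighbors $\tilde i$ with $\xi_{\tilde i}\in S_i(\Delta_n)$ to all selected neighbors via the quantile rule, and the same self-inner-product-via-cross-inner-product decomposition with reference nodes to obtain Part 1 without a rate-losing Cauchy--Schwarz step. If anything, you are slightly more careful than the paper in making the binomial occupancy bound on the latent positions explicit, a step the paper asserts without proof.
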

\begin{proof}[Proof of Lemma \ref{lemma1}]
	We start with concentration results. For any $i,j$ such that $i\ne j$, we have\\
	\begin{align}\label{lemma1_1}
		&\frac{1}{n\left| \mathcal{N} ^k \right|}\left| \sum\nolimits_{\ell,k^{\prime}}^{}{\left( P_{i\ell k^{\prime}}P_{j\ell k^{\prime}}-A_{i\ell k^{\prime}}A_{j\ell k^{\prime}} \right)} \right|   \nonumber \\
		\le &\frac{\left| \sum\nolimits_{\ell \ne i,j;k^{\prime}}^{}{\left( P_{i\ell k^{\prime}}P_{j\ell k^{\prime}}-A_{i\ell k^{\prime}}A_{j\ell k^{\prime}} \right)} \right|}{\left( n-2 \right) \left| \mathcal{N} ^k \right|}\cdot \frac{n-2}{n}+\frac{\sum\nolimits_{k^{\prime}\in \mathcal{N} ^k}^{}{\left| \left( A_{iik^{\prime}}+A_{jjk^{\prime}} \right) A_{ijk^{\prime}} \right|}}{n\left| \mathcal{N} ^k \right|}   \nonumber \\
		&+\frac{\sum\nolimits_{k^{\prime}\in \mathcal{N} ^k}^{}{\left| \left( P_{iik^{\prime}}+P_{jjk^{\prime}} \right) P_{ijk^{\prime}} \right|}}{n\left| \mathcal{N} ^k \right|}   \nonumber  \\
		\le &\frac{\left| \sum\nolimits_{\ell \ne i,j;k^{\prime}}^{}{\left( P_{i\ell k^{\prime}}P_{j\ell k^{\prime}}-A_{i\ell k^{\prime}}A_{j\ell k^{\prime}} \right)} \right|}{\left( n-2 \right) \left| \mathcal{N} ^k \right|}\cdot \frac{n-2}{n}+\frac{4}{n}
	\end{align}
	By Bernstein's inequality, for any $ 0< \epsilon \le 1 $ and $n\ge3$, we have
	\begin{align*}
		&P\left\{ \frac{\left| \sum\nolimits_{\ell \ne i,j;k^{\prime}}^{}{\left( P_{i\ell k^{\prime}}P_{j\ell k^{\prime}}-A_{i\ell k^{\prime}}A_{j\ell k^{\prime}} \right)} \right|}{\left( n-2 \right) \left| \mathcal{N} ^k \right|}\ge \epsilon \right\}\\
		\le &2\exp \left\{ -\frac{\left( n-2 \right) \left| \mathcal{N} ^k \right|\epsilon ^2}{2\left( 1+\epsilon /3 \right)} \right\} \le 2\exp \left\{ -\frac{1}{4}n\left| \mathcal{N} ^k \right|\epsilon ^2 \right\}.
	\end{align*}
	Taking a union bound over all $i\ne j$, we have
	\begin{equation*}
		P\left\{ \underset{i,j:i\ne j}{\max}\frac{\left| \sum\nolimits_{\ell \ne i,j;k^{\prime}}{\left( P_{i\ell k^{\prime}}P_{j\ell k^{\prime}}-A_{i\ell k^{\prime}}A_{j\ell k^{\prime}} \right)} \right|}{\left( n-2 \right) \left| \mathcal{N} ^k \right|}\ge \epsilon \right\} \le 2n^2\exp \left\{ -\frac{1}{4}n\left| \mathcal{N} ^k \right|\epsilon ^2 \right\}.
	\end{equation*}
	Then setting $\epsilon =\left\{ \left( \tilde{C_2}+8 \right)^{2/3} \left( n\left| \mathcal{N} ^k \right| \right) ^{-1}\log n \right\} ^{1/2}=\left\{ \left( \tilde{C_2}+8 \right) \left( nK \right) ^{-1}\log n \right\} ^{1/3}$, we have
	\begin{equation}\label{lemma1_2}
		P\left\{ \underset{i,j:i\ne j}{\max}\frac{\left| \sum\nolimits_{l\ne i,j;k^{\prime}}{\left( P_{ilk^{\prime}}P_{jlk^{\prime}}-A_{ilk^{\prime}}A_{jlk^{\prime}} \right)} \right|}{\left( n-2 \right) \left| \mathcal{N} ^k \right|}\ge \left\{ \frac{\left( \tilde{C_2}+8 \right) \log n}{nK} \right\} ^{1/3} \right\} \le 2n^{-\tilde{C_2}/4}
	\end{equation}
	Combining \eqref{lemma1_1} and \eqref{lemma1_2}, with probability $1- 2n^{-\tilde{C_2}/4}$, for $n$ enough the following holds
	\begin{equation}\label{lemma1_3}
		\frac{\left| \sum\nolimits_{l,k^{\prime}}^{}{\left( P_{ilk^{\prime}}P_{jlk^{\prime}}-A_{ilk^{\prime}}A_{jlk^{\prime}} \right)} \right|}{n\left| \mathcal{N} ^k \right|}\le \left\{ \frac{\left( \tilde{C_2}+8 \right) \log n}{nK} \right\} ^{1/3}+\frac{4}{n}
	\end{equation}
	
	Next we show the properties of Lipschitz in the same layer network and the similar elements,
	\begin{align}\label{lemma1_4}
		\frac{1}{n\left| \mathcal{N} ^k \right|}\left| \sum_{j,k^{\prime}\in \mathcal{N} ^k}{\left( P_{i^{\prime}jk^{\prime}}^{2}-P_{i^{\prime}jk^{\prime}}P_{\tilde{i}^{\prime}jk^{\prime}} \right)} \right|&\le \frac{1}{n\left| \mathcal{N} ^k \right|}\sum_{j,k^{\prime}}{\left| P_{i^{\prime}jk^{\prime}} \right|\cdot \left| P_{i^{\prime}jk^{\prime}}-P_{\tilde{i}^{\prime}jk^{\prime}} \right|}   \nonumber \\
		&\le \frac{1}{n\left| \mathcal{N} ^k \right|}\sum_{j,k^{\prime}}{\left| P_{i^{\prime}jk^{\prime}}-P_{\tilde{i}^{\prime}jk^{\prime}} \right|}\le L_n\Delta _n,
	\end{align}
	where the last inequality follows from
	\begin{equation*}
		\left| P_{i^{\prime}\ell k}-P_{i\ell k} \right|=\left| f\left( \xi _{i^{\prime}},\xi _{\ell},\eta _k \right) -f\left( \xi _i,\xi _{\ell},\eta _k \right) \right|\le L_n\left| \xi _{i^{\prime}}-\xi _i \right|\le L_n\Delta _n
	\end{equation*}
	for all $\ell$ and $k$.
	
	We are now ready to upper bound $\tilde{D}\left( i,i^{\prime} \right) $ for $i^{\prime}\in \mathcal{N} _{i}^{k^{\prime}}$. We bound $\tilde{D}\left( i,i^{\prime} \right)$ via bounding $\tilde{D}(i,\tilde{i})$ for $\tilde{i}$ with $\xi _{\tilde{i}}\in S_i\left( \Delta _n \right)$,which is equivalent to bound the $\tilde{d}\left( i,i^{\prime} \right)$ by $\tilde{d}\left( i,\tilde{i} \right)$. By \eqref{lemma1_3} and \eqref{lemma1_4}, with probability $1-2n^{-\tilde{C_2}/4}$, we have
	\begin{align}\label{dii}
		&\tilde{D}\left( i,\tilde{i} \right) =\underset{j\ne i,\tilde{i}}{\max}\frac{1}{n\left| \mathcal{N} ^k \right|}\left| \sum_{\ell ,k^{\prime}\in \mathcal{N} ^k}{\left( A_{i\ell k^{\prime}}A_{\ell jk^{\prime}}-A_{\tilde{i}\ell k^{\prime}}A_{\ell jk^{\prime}} \right)} \right| \nonumber \\
		\le &\underset{j\ne i,\tilde{i}}{\max}\frac{1}{n\left| \mathcal{N} ^k \right|}\left| \sum_{\ell ,k^{\prime}\in \mathcal{N} ^k}{\left( P_{i\ell k^{\prime}}P_{\ell jk^{\prime}}-P_{\tilde{i}\ell k^{\prime}}P_{\ell jk^{\prime}} \right)} \right|+\frac{1}{n\left| \mathcal{N} ^k \right|}\left| \sum_{\ell ,k^{\prime}\in \mathcal{N} ^k}{\left( A_{i\ell k^{\prime}}A_{\ell jk^{\prime}}-P_{i\ell k^{\prime}}P_{\ell jk^{\prime}} \right)} \right| \nonumber \\
		&+\frac{1}{n\left| \mathcal{N} ^k \right|}\left| \sum_{\ell ,k^{\prime}\in \mathcal{N} ^k}{\left( A_{\tilde{i}\ell k^{\prime}}A_{\ell jk^{\prime}}-P_{\tilde{i}\ell k^{\prime}}P_{\ell jk^{\prime}} \right)} \right| \nonumber \\
		\le &L_n\Delta _n+2\left[ \left\{ \frac{\left( \tilde{C_2}+8 \right) \log n}{nK} \right\} ^{1/3}+\frac{4}{n} \right].
	\end{align}
	Now since the fraction of nodes contained in $\left| \left\{ \tilde{i}:\xi _{\tilde{i}}\in S_i\left( \Delta _n \right) \right\} \right|$ is at least $h=C\left\{ \left( nK \right) ^{-1}\log n \right\} ^{1/3}$, this upper bounds $\tilde{D}\left( i,i^{\prime} \right)$ for $i^{\prime}\in \mathcal{N} _{i}^{k^{\prime}}$, since nodes in $i^{\prime}\in \mathcal{N} _{i}^{k^{\prime}}$ have the lowest $h$ fraction of values in $\{\tilde{d}(i,k)\}_k$. Setting $\Delta _n$ as $\tilde{C}_0\left\{ \left( nK \right) ^{-1}\log n \right\} ^{1/3}$ and $0<C \le \tilde{C}_0$, by \eqref{lemma1_3}, with probability $1-2n^{-\tilde{C_2}/4}$, for all $i$, at least $\tilde{C}_0\left\{ \left( nK \right) ^{-1}\log n \right\} ^{1/3}$ fraction of nodes $\tilde{i}\ne i$ satisfy both $\xi _{\tilde{i}}\in S_i\left( \Delta _n \right) $ and \eqref{dii}.
	Then we yields that
	\begin{align}\label{lemma1_5}
		\tilde{D}\left( i,i^{\prime} \right) &\le L_n\Delta _n+2\left\{ \frac{\left( \tilde{C_2}+8 \right) \log n}{nK} \right\} ^{1/3}+\frac{8}{n}\nonumber \\
		&=\left\{ \tilde{C}_0L_n+2\left( \tilde{C_2}+8 \right) ^{1/3} \right\} \left( \frac{\log n}{nK} \right) ^{1/3}+\frac{8}{n}.
	\end{align}
	holds for all $i$ and all $i^{\prime}\in \mathcal{N} _i$ simultaneously with growing $n$ .
	
	We are now ready to complete the proof of the first claim of Lemma \ref{lemma1}. By \eqref{lemma1_3}, \eqref{lemma1_4}, \eqref{lemma1_5}, with probability $1-2n^{-\tilde{C_2}/4}$, the following holds. For $n$ large enough and for all $i$ and $i^{\prime} \in \mathcal{N}_i$ we can find $\tilde{i} \in S_i\left(\Delta_n\right)$ and $\tilde{i^{\prime}} \in S_{i^{\prime}}\left(\Delta_n\right)$ such that $i, i^{\prime}, \tilde{i}$ and $\tilde{i^{\prime}}$ are different from each other. Then by adding and subtracting items, tackling the $\boldsymbol{A}$ by \eqref{lemma1_4}, the $\boldsymbol{B}$ by \eqref{lemma1_3}, the $\boldsymbol{C}$ by \eqref{lemma1_5}, finally we have
	\begin{equation*}
		\begin{split}
			&\sum_{k^{\prime}\in \mathcal{N} ^k,j}{\left( P_{i^{\prime}jk^{\prime}}-P_{ijk^{\prime}} \right)}^{2}\le \left| \sum_{k^{\prime}\in \mathcal{N} ^k,j}{\left( P_{i^{\prime}jk^{\prime}}^{2}-P_{i^{\prime}jk^{\prime}}P_{ijk^{\prime}} \right)} \right|+\left| \sum_{k^{\prime}\in \mathcal{N} ^k,j}{\left( P_{ijk^{\prime}}^{2}-P_{ijk^{\prime}}P_{i^{\prime}jk^{\prime}} \right)} \right|\\
			\le &\underset{\boldsymbol{A}}{\underbrace{\left| \sum_{j,k^{\prime}}{\left( P_{i^{\prime}}^{2}-P_{i^{\prime}}P_{\tilde{i}^{\prime}} \right)} \right|+\left| \sum_{j,k^{\prime}}{\left( P_{i^{\prime}}P_{\tilde{i}}-P_{i^{\prime}}P_i \right)} \right|+\left| \sum_{j,k^{\prime}}{\left( P_{i}^{2}-P_iP_{\tilde{i}} \right)} \right|+\left| \sum_{j,k^{\prime}}{\left( P_{\tilde{i}}P_{i^{\prime}}-P_iP_{i^{\prime}} \right)} \right|}}\\
			&+\left| \sum_{k^{\prime}\in \mathcal{N} ^k,j}{\left( P_{i^{\prime}}P_{\tilde{i}^{\prime}}-P_{i^{\prime}}P_{\tilde{i}} \right)} \right|+\left| \sum_{k^{\prime}\in \mathcal{N} ^k,j}{\left( P_iP_{\tilde{i}}-P_{\tilde{i}}P_{i^{\prime}} \right)} \right|\\
			\le &4n\left| \mathcal{N} ^k \right|L_n\Delta _n+\left| \sum_{k^{\prime}\in \mathcal{N} ^k,j}{\left( P_{i^{\prime}}P_{\tilde{i}^{\prime}}-P_{i^{\prime}}P_{\tilde{i}} \right)} \right|+\left| \sum_{k^{\prime}\in \mathcal{N} ^k,j}{\left( P_iP_{\tilde{i}}-P_{\tilde{i}}P_{i^{\prime}} \right)} \right|\\
			\le &\underset{\boldsymbol{B}}{\underbrace{\left| \sum_{j,k^{\prime}}{\left( P_{i^{\prime}}P_{\tilde{i}^{\prime}}-A_{i^{\prime}}A_{\tilde{i}^{\prime}} \right)} \right|+\left| \sum_{j,k^{\prime}}{\left( P_{i^{\prime}}P_{\tilde{i}}-A_{i^{\prime}}A_{\tilde{i}} \right)} \right|+\left| \sum_{j,k^{\prime}}{\left( P_iP_{\tilde{i}}-A_iA_{\tilde{i}} \right)} \right|+\left| \sum_{j,k^{\prime}}{\left( P_{\tilde{i}}P_{i^{\prime}}-A_{\tilde{i}}A_{i^{\prime}} \right)} \right|}}\\
			&+\underset{\boldsymbol{C}}{\underbrace{\left| \sum_{j,k^{\prime}}{\left( A_{i^{\prime}}A_{\tilde{i}^{\prime}}-A_{i^{\prime}}A_{\tilde{i}} \right)} \right|+\left| \sum_{j,k^{\prime}}{\left( A_iA_{\tilde{i}}-A_{\tilde{i}}A_{i^{\prime}} \right)} \right|}}+4n\left| N^k \right|L_n\Delta _n\\
			\le &4n|\mathcal{N} ^k|\left[ \left\{ \frac{\left( \tilde{C_2}+8 \right) \log n}{nK} \right\} ^{1/3}+\frac{4}{n} \right] +2n|\mathcal{N} ^k|\tilde{D}\left( i,i^{\prime} \right) +4n\left| \mathcal{N} ^k \right|L_n\Delta _n\\
			=&n\left| \mathcal{N} ^k \right|\left[ \left\{ 6\tilde{C}_0L_n+8\left( \tilde{C_2}+8 \right) ^{1/3} \right\} \left( \frac{\log n}{nK} \right) ^{1/3}+\frac{32}{n} \right] .\\
		\end{split}
	\end{equation*}

	This completes the proof of Lemma \ref{lemma1}.
\end{proof}

Next we give the other Lemma, for all $k$ and $k^{\prime}\in \mathcal{N} _k$ we can find $\tilde{k}\in S_k\left( \Delta _K \right) $ and $\tilde{k}^{\prime}\in S_{k^{\prime}}\left( \Delta _K \right) $ such that $k$, $k^{\prime}$, $\tilde{k}$ and $\tilde{k}^{\prime}$ are different from each other. We denote $P_k$ as $ P_{ijk}$.

Recall that we defined a measure of closeness as 
\begin{equation*}
	\tilde{d}\left( k,k^{\prime} \right) =\max_{l\ne k,k^{\prime}} \left| \mathrm{tr}\left[ \left( A^k-A^{k^{\prime}} \right) ^{\mathrm{T}}A^l \right] \right|/n^2=\max_{l\ne k,k^{\prime}} \left| \mathrm{tr}\left( {A^k}^{\mathrm{T}}A^l \right) -\mathrm{tr}\left( {A^{k^{\prime}}}^{\mathrm{T}}A^l \right) \right|/n^2.
\end{equation*}

\begin{lemma}\label{lemma2}
	Let $0 < C_3 < 1$ be an arbitrary global constants chosen to be as small as possible and assume $n$ is large enough so that 
	\begin{enumerate}
		\item With probability $1-2\exp \left( -n^{C_3/4} \right)$, for all $i$ and $i^{\prime}\in \mathcal{N} _i$, we have
		\begin{equation*}
			\frac{1}{n^2}\left| \sum\nolimits_{i,j}^{}{\left( A_{ijq}A_{ijl}-P_{ijq}P_{ijl} \right)} \right|\le \frac{1}{n^{1-C_3}}.
		\end{equation*}
		\item With probability $1-2\exp \left( -n^{C_3/4} \right)$, for all $k$ and all $k^{\prime}\in \mathcal{N} _i$ simultaneously, we have
		\begin{equation*}
			\tilde{d}\left( k,k^{\prime} \right) \le L_K\Delta _K+\frac{1}{n^{1-C_3}}
		\end{equation*}		
	\end{enumerate}
\end{lemma}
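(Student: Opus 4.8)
The plan is to mirror the two-part structure of Lemma~\ref{lemma1}, now working along the layer dimension rather than the node dimension. For the first claim I would fix a pair of distinct layers $q\ne l$ and exploit that distinct layers are generated independently: the products $\{A_{ijq}A_{ijl}\}$ are then independent across node pairs with mean $E(A_{ijq}A_{ijl})=P_{ijq}P_{ijl}$. Each summand lies in $[0,1]$ with variance at most $P_{ijq}P_{ijl}\le 1$, so the centered sum over the $O(n^2)$ node pairs has total variance $O(n^2)$. Bernstein's inequality at deviation level $t=n^{1+C_3}$ gives
\begin{equation*}
	\mathrm{pr}\left\{\Big|\sum_{i,j}\big(A_{ijq}A_{ijl}-P_{ijq}P_{ijl}\big)\Big|\ge n^{1+C_3}\right\}\le 2\exp\big(-c\,n^{2C_3}\big)
\end{equation*}
for some $c>0$, which after dividing by $n^2$ yields the stated threshold $n^{-(1-C_3)}$; since $2C_3>C_3/4$, the failure probability is at most $2\exp(-n^{C_3/4})$ for $n$ large, proving the first claim.

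For the second claim I would bound $\tilde d(k,k')$ by replacing the neighbor $k'$ with an auxiliary layer $\tilde k$ whose latent position satisfies $\eta_{\tilde k}\in S_k(\Delta_K)$, exactly as $\tilde i$ was used in Lemma~\ref{lemma1}. Writing out the trace, $\tilde d(k,\tilde k)=\max_{l\ne k,\tilde k}\,n^{-2}\big|\sum_{i,j}(A_{ijk}-A_{ij\tilde k})A_{ijl}\big|$; for every admissible $l$ the two pairs $(k,l)$ and $(\tilde k,l)$ are distinct, so applying the first claim (with a union bound over the at most $K$ values of $l$, absorbed into the failure probability $2\exp(-n^{C_3/4})$ provided $\log K\lesssim n^{2C_3}$) lets me replace the adjacency products by their expectations and obtain
\begin{equation*}
	\tilde d(k,\tilde k)\le \frac{1}{n^2}\Big|\sum_{i,j}(P_{ijk}-P_{ij\tilde k})P_{ijl}\Big|+\frac{1}{n^{1-C_3}}.
\end{equation*}
The deterministic part is handled by the $L_K$-Lipschitz property in the layer variable: since $\eta_{\tilde k}$ lies in the same Lipschitz piece as $\eta_k$, $|P_{ijk}-P_{ij\tilde k}|=|f(\xi_i,\xi_j,\eta_k)-f(\xi_i,\xi_j,\eta_{\tilde k})|\le L_K|\eta_k-\eta_{\tilde k}|\le L_K\Delta_K$, and $P_{ijl}\le 1$, so the first term is at most $L_K\Delta_K$, giving $\tilde d(k,\tilde k)\le L_K\Delta_K+n^{-(1-C_3)}$.

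The remaining step, which I expect to be the main obstacle, is to pass from $\tilde d(k,\tilde k)$ to $\tilde d(k,k')$ for the actual neighbor $k'\in\mathcal N^k$. Since $k'$ is selected among the lowest $h_1$ fraction of layers under $\tilde d(k,\cdot)$, it suffices to exhibit at least an $h_1$ fraction of auxiliary layers $\tilde k$ with $\eta_{\tilde k}\in S_k(\Delta_K)$ obeying the bound above; the neighbor $k'$ then inherits a distance no larger than the worst such $\tilde d(k,\tilde k)$. Guaranteeing this requires a density argument on the layer latent positions: because the $\eta_k$ are i.i.d.\ uniform and $|S_k(\Delta_K)|\asymp\Delta_K$, the fraction of layers falling in $S_k(\Delta_K)$ concentrates near $\Delta_K$, and one must show this dominates $h_1$ uniformly in $k$ via a binomial tail bound and a union bound over the $K$ layers. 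This is the layer analogue of the node-density step in Lemma~\ref{lemma1}, and it is the crux precisely because the large-$K$ regime of Theorem~\ref{theorem2}, where layer smoothing becomes essentially exact, forces $\Delta_K$ (equivalently $h_1$) to be small enough that $L_K\Delta_K$ is negligible yet large enough that the selected neighborhood still contains enough layers for the bound to transfer.
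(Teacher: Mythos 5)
Your proposal follows essentially the same route as the paper's proof: Bernstein's inequality applied to the cross-layer products $\sum_{i,j}(A_{ijq}A_{ijl}-P_{ijq}P_{ijl})$ at deviation level $n^{-(1-C_3)}$ (valid because distinct layers are independent), then bounding $\tilde d(k,\tilde k)$ for an auxiliary layer $\tilde k$ with $\eta_{\tilde k}\in S_k(\Delta_K)$ by combining this concentration with the $L_K$-Lipschitz property, and finally transferring the bound to actual neighbors $k'\in\mathcal N^k$ via the quantile argument that the lowest-$h_1$-fraction neighborhood is dominated by the layers falling in $S_k(\Delta_K)$. If anything, you are more explicit than the paper, which asserts the union bound over layer pairs and the claim that at least an $h_1$ fraction of layers lies in $S_k(\Delta_K)$ without spelling out the $\log K\lesssim n^{2C_3}$ requirement or the binomial tail bound.
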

\begin{proof}[Proof of Lemma \ref{lemma2}]
	We start with concentration results. For fixed $p$, $q$ such that $p \ne q$, By Bernstein's inequality, for any $0<\epsilon \le 1$ we have
	\begin{equation*}
		P\left\{ \left| \frac{\sum\nolimits_{i,j}^{}{\left( A_{ijq}A_{ijl}-P_{ijq}P_{ijl} \right)}}{n^2} \right|\ge \epsilon \right\} \le 2\exp \left\{ -\frac{\frac{1}{2}n^2\epsilon ^2}{1+\frac{1}{3}\epsilon} \right\} \le 2\exp \left( -\frac{1}{4}n^2\epsilon ^2 \right).
	\end{equation*}
	Then setting $\epsilon =1/n^{1-C_3}$ with $n$ large enough so that $\epsilon \le 1$, $0<C_3<1$ be an arbitrary global constants chosen to be as small as possible and let $n$ be large enough, we have
	\begin{equation}\label{lemma2_1}
		P\left\{ \left| \frac{\sum\nolimits_{i,j}^{}{\left( A_{ijq}A_{ijl}-P_{ijq}P_{ijl} \right)}}{n^2} \right|\ge \frac{1}{n^{1-C_3}} \right\} \le 2\exp \left( -n^{C_3/4} \right). 
	\end{equation}
	This completes the  claim 1 of Lemma \ref{lemma2}.
	
	Next we show an inequality which uses the property of Lipschitz.
	\begin{align}\label{lemma2_2}
		&\left| \sum\nolimits_{i,j}^{}{\left( P_{ijk}P_{ijk}-P_{ijk}P_{ij\tilde{k}} \right)} \right|=\left| \sum\nolimits_{i,j}^{}{P_{ijk}\left( P_{ijk}-P_{ij\tilde{k}} \right)} \right|\le \left| \sum\nolimits_{i,j}^{}{\left( P_{ijk}-P_{ij\tilde{k}} \right)} \right|   \nonumber \\
		\le& \sum\nolimits_{i,j}^{}{\left| P_{ijk}-P_{ijk} \right|}=\sum\nolimits_{i,j}^{}{\left| f\left( \xi _i,\xi _j,\eta _k \right) -f\left( \xi _i,\xi _j,\eta _{\tilde{k}} \right) \right|}   
		\le \sum\nolimits_{i,j}^{}{L_K\Delta _K}=n^2L_K\Delta _K.
	\end{align}
	
	We are now ready to prove the claim 2 of Lemma \ref{lemma2}. We denote the $A_k$ and $P_k$ as the adjacency and probability matrix of the $k$th network respectively. We  bound $ \tilde{d}\left( k,k^{\prime} \right)$ via bounding
	$ \tilde{d}\left( k,\tilde{k} \right) $ for $k^{\prime}\in \mathcal{N}^{k}$. By \eqref{lemma2_1} and \eqref{lemma2_2} , with probability $1-2\exp \left( -n^{C_3/4} \right)$ we have
	\begin{align}\label{lemma2_3}
		& \tilde{d}\left( k,\tilde{k} \right) =\max_{l\ne k,\tilde{k}} \frac{1}{n^2}\left| tr\left( A^kA^l \right) -tr\left( A^{\tilde{k}}A^l \right) \right|  \nonumber \\
		=&\max_{l\ne k,\tilde{k}} \frac{1}{n^2}\left| tr\left( A^kA^l+P^kP^l-P^kP^l-A^{\tilde{k}}A^l+P^{\tilde{k}}P^l-P^kP^l \right) \right|    \nonumber   \\
		\le &\max_{l\ne k,\tilde{k}} \frac{1}{n^2}\left| tr\left( P^kP^l-P^{\tilde{k}}P^l \right) \right|+2\max_{s,t:s\ne t} \frac{1}{n^2}\left| tr\left( A^sA^t-P^sP^t \right) \right|    \nonumber   \\
		=&\max_{l\ne k,\tilde{k}} \frac{1}{n^2}\left| \sum_{i,j}{\left( P_{ijk}P_{ijl}-P_{ij\tilde{k}}P_{ijl} \right)} \right|+2\max_{s,t:s\ne t} \frac{1}{n^2}\left| \sum_{i,j}{\left( A_{ijs}A_{ijt}-P_{ijs}P_{ijt} \right)} \right|    \nonumber   \\
		\le &L_K\Delta _K+\frac{1}{n^{1-C_3}}.
	\end{align}
	Now since the fraction of layers contained in $\left| \left\{ \tilde{k}:\eta _{\tilde{k}}\in S_k\left( \Delta _K \right) \right\} \right|$ is at least $h=C_0\left\{ \left( nK \right) ^{-1}\log n \right\} ^{1/3}$, this upper bounds $ \tilde{d}\left( k,k^{\prime} \right)$ for $k^{\prime}\in \mathcal{N} ^k$, since layer in $\mathcal{N}^{k}$ have the lowest $h$ fraction of values in 
	$\left\{  \tilde{d}\left( k,q \right) \right\} _q$. Setting $\Delta _K$ as $\tilde{C}_0\left\{ \left( nK \right) ^{-1}\log n \right\} ^{1/3}$ and $0<C \le \tilde{C}$, by   \eqref{lemma2_1}, with probability $1-2\exp \left( -n^{C_3/4} \right)$, for all $k$, at least $\tilde{C}_0\left\{ \left( nK \right) ^{-1}\log n \right\} ^{1/3}$  fraction of layer  $\tilde{k}\ne k$ satisfy both $\eta _{\tilde{k}}\in S_k\left( \Delta _K \right) $ and \eqref{lemma2_3}. Then we yields that
	\begin{equation}\label{lemma2_4}
		\tilde{d}\left( k,k^{\prime} \right) =L_K\Delta _K+\frac{1}{n^{1-C_3}}
	\end{equation}
	holds for all $k$ and $k^{\prime}\in \mathcal{N} ^k$ simultaneously with $n$ grows.
	
	We are now ready to complete the proof of the first claim of Lemma 2. By adding and subtracting items, tackling the $\boldsymbol{A}$ by \eqref{lemma2_2}, the $\boldsymbol{B}$ by \eqref{lemma2_1}, the $\boldsymbol{C}$ by \eqref{lemma2_4}, finally we have
	
	
	\begin{align}
		&\sum\nolimits_{i,j}^{}{\left( P_k-P_{k^{\prime}} \right) ^2}\le \left| \sum_{i,j}{\left( {P_k}^2-P_{k^{\prime}}P_k \right)} \right|+\left| \sum_{i,j}{\left( {P_{k^{\prime}}}^2-P_kP_{k^{\prime}} \right)} \right|  \nonumber  \\
		=&\underset{\boldsymbol{A}}{\underbrace{\left| \sum_{i,j}{\left( {P_k}^2-P_kP_{\tilde{k}} \right)} \right|+\left| \sum_{i,j}{\left( P_{k^{\prime}}P_{\tilde{k}}-P_{k^{\prime}}P_k \right)} \right|+\left| \sum_{i,j}{\left( {P_{k^{\prime}}}^2-P_{k^{\prime}}P_{\tilde{k}^{\prime}} \right)} \right|+\left| \sum_{i,j}{\left( P_kP_{\tilde{k}^{\prime}}-P_kP_{k^{\prime}} \right)} \right|}}    \nonumber   \\
		&+\left| \sum_{i,j}{\left( P_kP_{\tilde{k}}-P_{k^{\prime}}P_{\tilde{k}} \right)} \right|+\left| \sum_{i,j}{\left( P_{k^{\prime}}P_{\tilde{k}^{\prime}}-P_{k^{\prime}}P_{\tilde{k}} \right)} \right|    \nonumber   \\
		\le &4n^2L_K\Delta _K+\left| \sum_{i,j}{\left( P_kP_{\tilde{k}}-P_{k^{\prime}}P_{\tilde{k}} \right)} \right|+\left| \sum_{i,j}{\left( P_{k^{\prime}}P_{\tilde{k}^{\prime}}-P_kP_{\tilde{k}^{\prime}} \right)} \right|    \nonumber\\
		\le &\underset{\boldsymbol{B}}{\underbrace{\left| \sum_{i,j}{\left( P_kP_{\tilde{k}}-A_kA_{\tilde{k}} \right)} \right|+\left| \sum_{i,j}{\left( P_{k^{\prime}}P_{\tilde{k}}-A_{k^{\prime}}A_{\tilde{k}} \right)} \right|+\left| \sum_{i,j}{\left( P_{k^{\prime}}P_{\tilde{k}^{\prime}}-A_{k^{\prime}}A_{\tilde{k}^{\prime}} \right)} \right|+\left| \sum_{i,j}{\left( P_kP_{\tilde{k}^{\prime}}-A_kA_{\tilde{k}^{\prime}} \right)} \right|}}   \nonumber   \\
		&+\underset{\mathcal{C}}{\underbrace{\left| \sum_{i,j}{\left( A_AkA_{\tilde{k}}-A_{k^{\prime}}A_{\tilde{k}} \right)} \right|+\left| \sum_{i,j}{\left( A_{k^{\prime}}A_{\tilde{k}^{\prime}}-A_kA_{\tilde{k}^{\prime}} \right)} \right|}}+4n^2L_K\Delta _K     \nonumber   \\
		\le &4n^{1+C_3}+2\underset{l\ne k,k^{\prime}}{\max}\left| tr\left( A^kA^l \right) -tr\left( A^{k^{\prime}}A^l \right) \right|+4n^2L_K\Delta _K     \nonumber   \\
		=&4n^{1+C_3}+2n^2 \tilde{d}\left( k,k^{\prime} \right) +4n^2L_K\Delta _K \nonumber   \\
		=&8n^{1+C_3}+6n^2L_K\Delta _K
	\end{align}
	This completes the proof of Lemma \ref{lemma2}.
\end{proof}


We continue focusing on the case where $K\ge C^{-3/2}\sqrt{\frac{n}{\log n}}$, and the following lemma shows the estimation error bound in this case.
\begin{lemma}\label{lemma3}
	Assume $L_n$ and $L_K$ in $L$ are global constants, and $\delta =\delta (n,K)$ depends on $n$, $K$ satisfying ~\\ $\lim_{n,K\rightarrow \infty} \delta /\left\{ \left( nK \right) ^{-1}\log n \right\} ^{1/3}\rightarrow \infty $ and $K\ge C^{-3/2}\sqrt{\frac{n}{\log n}}$, then the estimator $\tilde{P}$ defined in \eqref{equation::MNSestimator}, with neighborhood $\mathcal{N}^{k}$, $\mathcal{N} _{i}^{k^{\prime}}$defined in \eqref{equation::layer_neighbor} and \eqref{equation::MNSnode_neighbor} and $h=C\left\{ \left( nK \right) ^{-1}\log n \right\} ^{1/3}$ for any global constant $C > 0$ satisfies 
	\begin{equation*}
		\max_{f\in \mathcal{F} _{\delta ;L}} \mathrm{pr}\left\{ \frac{1}{n^2}\left\| \tilde{P}^k-P^k \right\| _{F}^{2}\ge \tilde{C_1}\left( \frac{\log n}{nK} \right) ^{1/3}+\frac{\tilde{C_3}}{n^{1-\alpha}} \right\} \le n^{-\gamma}
	\end{equation*}
	for $\gamma>0$, where $\tilde{C_1}$ and $\tilde{C_3}$ are some positive global constants, and $0<\alpha<1$ is arbitrary.
\end{lemma}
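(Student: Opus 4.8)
The plan is to start from the bias--variance decomposition already recorded in \eqref{decomposition}, which reduces the normalized Frobenius error to controlling
\[
\frac{3}{n^2M^2}\sum_{i,j}\left\{J_1(i,j,k)+J_2(i,j,k)+J_3(i,j,k)\right\},\qquad M=\sum_{k'\in\mathcal{N}^k}\left|\mathcal{N}_i^{k'}\right|.
\]
Because both neighborhoods are obtained by thresholding at the $h$th sample quantile with $h=C\{(nK)^{-1}\log n\}^{1/3}$, I would first record that $|\mathcal{N}^k|\asymp hK$ and $|\mathcal{N}_i^{k'}|\asymp hn$, so that $M\asymp h^2nK=C^2(nK)^{1/3}(\log n)^{2/3}$; this normalization is the yardstick against which every subsequent bound is measured. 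The three summands are then handled separately: $J_1$ is the stochastic (variance) term, $J_2$ is the within-layer node-smoothing bias, and $J_3$ is the across-layer smoothing bias, and Lemmas~\ref{lemma1} and~\ref{lemma2} are designed precisely to control the latter two.

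For the variance term, conditioning on the (data-dependent) neighborhoods, the inner sum of $J_1$ consists of $M$ centered Bernoulli variables that are independent across distinct pairs $(i',k')$; a Bernstein bound together with a union over the $n^2$ indices $(i,j)$ yields $|\sum_{k',i'}(A_{i'jk'}-P_{i'jk'})|\lesssim\sqrt{M\log n}$ uniformly, whence $n^{-2}M^{-2}\sum_{i,j}J_1\lesssim \log n/M$. Substituting $M\asymp C^2(nK)^{1/3}(\log n)^{2/3}$ shows $\log n/M\asymp(\log n/nK)^{1/3}$, so $J_1$ is of the same order as the smoothing bias and is absorbed into the $\tilde{C_1}$ term. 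For $J_2$ I would apply Cauchy--Schwarz, $J_2(i,j,k)\le M\sum_{k',i'}(P_{i'jk'}-P_{ijk'})^2$, reducing the contribution to $n^{-2}M^{-1}\sum_i\sum_{k'\in\mathcal{N}^k}\sum_{i'\in\mathcal{N}_i^{k'}}\sum_j(P_{i'jk'}-P_{ijk'})^2$. The first claim of Lemma~\ref{lemma1} bounds $\sum_{j,k'\in\mathcal{N}^k}(P_{i'jk'}-P_{ijk'})^2$ by $n|\mathcal{N}^k|[\{6\tilde{C}_0L_n+8(\tilde{C_2}+8)^{1/3}\}(\log n/nK)^{1/3}+32/n]$ simultaneously over all $i$ and all neighbors $i'$; summing the $\asymp hn$ neighbors per node and using $M\asymp hn|\mathcal{N}^k|$ makes the prefactor $hn|\mathcal{N}^k|/M\asymp1$, leaving a bound of order $(\log n/nK)^{1/3}+1/n$.

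The layer-bias term is treated in the same spirit with Lemma~\ref{lemma2}. Since $P_{ijk'}-P_{ijk}$ does not depend on $i'$, Cauchy--Schwarz gives $J_3(i,j,k)\le M\sum_{k'\in\mathcal{N}^k}|\mathcal{N}_i^{k'}|(P_{ijk'}-P_{ijk})^2$, so the contribution is controlled by $n^{-2}M^{-1}\sum_{k'\in\mathcal{N}^k}|\mathcal{N}_i^{k'}|\sum_{i,j}(P_{ijk'}-P_{ijk})^2$. The computation closing Lemma~\ref{lemma2} bounds $\sum_{i,j}(P_{ijk'}-P_{ijk})^2$ by $8n^{1+C_3}+6n^2L_K\Delta_K$ with $\Delta_K=\tilde{C}_0\{(nK)^{-1}\log n\}^{1/3}$; cancelling the $hn|\mathcal{N}^k|/M\asymp1$ factor leaves a contribution of order $(\log n/nK)^{1/3}+n^{C_3-1}$. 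Collecting the three pieces and setting $\alpha=C_3$ gives $n^{-2}\|\tilde{P}^k-P^k\|_F^2\lesssim\tilde{C_1}(\log n/nK)^{1/3}+\tilde{C_3}n^{\alpha-1}$, and a union bound over the failure events of Lemma~\ref{lemma1} (probability $2n^{-\tilde{C_2}/4}$), Lemma~\ref{lemma2} (probability $2\exp(-n^{C_3/4})$) and the Bernstein event for $J_1$ delivers the overall $n^{-\gamma}$.

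The main obstacle I anticipate is the data dependence of the neighborhoods: because the same adjacency tensor both selects the neighbors and supplies the averaged entries, one cannot treat the selection as fixed. The resolution---and the reason Lemmas~\ref{lemma1} and~\ref{lemma2} are phrased to hold \emph{simultaneously} over all $i$, all $i'\in\mathcal{N}_i$, all $k$ and all $k'\in\mathcal{N}^k$---is that the quantile thresholds guarantee every selected neighbor lies, with high probability, inside the Lipschitz window $S_i(\Delta_n)$ or $S_k(\Delta_K)$, so the deterministic bias bounds apply to whatever random set is chosen. Propagating this uniformity through the Cauchy--Schwarz steps without losing the cancellation $hn|\mathcal{N}^k|/M\asymp1$, and verifying that the variance deviation is genuinely no larger than the bias in the regime $K\ge C^{-3/2}\sqrt{n/\log n}$, is where the delicacy lies; the remainder is bookkeeping of the constants $\tilde{C}_0,\tilde{C_1},\tilde{C_2},\tilde{C_3}$ and the powers of $h$ inside $M$.
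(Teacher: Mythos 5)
Your proposal follows essentially the same route as the paper's proof: the same $J_1+J_2+J_3$ decomposition from \eqref{decomposition}, Bernstein concentration for the variance term $J_1$ with a union bound over $(i,j)$, Lemma~\ref{lemma1} (via Cauchy--Schwarz with the prefactor $hn|\mathcal{N}^k|/M\asymp 1$) for the node-bias term $J_2$, Lemma~\ref{lemma2} for the layer-bias term $J_3$ with $\alpha=C_3$, and the same union over the three failure events. The only cosmetic difference is in $J_1$: you apply a single Bernstein bound to the whole centered sum, obtaining $\log n/M\asymp(\log n/nK)^{1/3}$ directly, whereas the paper splits $J_1$ into square terms (bounded deterministically by $M$) and cross terms (bounded by Bernstein), arriving at the same order under the regime $K\ge C^{-3/2}\sqrt{n/\log n}$; both versions share the same implicit conditioning on the data-dependent neighborhoods, so your attempt is faithful to the paper's argument.
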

\begin{proof}[Proof of Lemma \ref{lemma3}]
we first work with the $J_1\left( i,j,k \right)$  to split it into square terms and cross terms, note that 
\begin{equation}
	\begin{aligned}
		&\left\{ \sum_{\substack{k^{\prime}\in \mathcal{N} ^k\\ i^{\prime}\in \mathcal{N} _{i}^{k^{\prime}}}} \left( A_{i^{\prime}jk^{\prime}}-P_{i^{\prime}jk^{\prime}} \right) \right\} ^2 = \sum_{\substack{k^{\prime}\in \mathcal{N} ^k\\ i^{\prime}\in \mathcal{N} _{i}^{k^{\prime}}}} \left( A_{i^{\prime}jk^{\prime}}-P_{i^{\prime}jk^{\prime}} \right) ^2 \\
		&+ \sum_{\substack{k^{\prime}\in \mathcal{N} ^k\\ i^{\prime}\in \mathcal{N} _{i}^{k^{\prime}}}} \left( A_{i^{\prime}jk^{\prime}}-P_{i^{\prime}jk^{\prime}} \right) \sum_{\substack{k^{\prime\prime}\in \mathcal{N} ^k, i^{\prime\prime}\in \mathcal{N} _{i}^{k^{\prime}}\\ k^{\prime\prime}\ne k^{\prime} \text{ or } i^{\prime\prime}\ne i^{\prime}}} \left( A_{i^{\prime}jk^{\prime\prime}}-P_{i^{\prime\prime}jk^{\prime\prime}} \right).
	\end{aligned}
\end{equation}
then it can be decomposed  as follows:
\begin{equation}\label{1termdecomposed}
	\begin{aligned}
		\sum_{i,j} J_1(i,j,k) &= \sum_{i,j} \left\{ \sum_{\substack{k^{\prime}\in \mathcal{N} ^k\\ i^{\prime}\in \mathcal{N} _{i}^{k^{\prime}}}} \left( A_{i^{\prime}jk^{\prime}}-P_{i^{\prime}jk^{\prime}} \right) \right\} ^2
		= \sum_i \left\{ \sum_j \sum_{\substack{k^{\prime}\in \mathcal{N}^k\\ i^{\prime}\in \mathcal{N}_{i}^{k}}} \left( A_{i^{\prime}jk^{\prime}}-P_{i^{\prime}jk^{\prime}} \right) ^2 \right. \\
		&+ \left. \sum_j \sum_{\substack{k^{\prime}\in \mathcal{N}^k\\ i^{\prime}\in \mathcal{N}_{i}^{k}}} \left( A_{i^{\prime}jk^{\prime}}-P_{i^{\prime}jk^{\prime}} \right) \sum_{\substack{k^{\prime\prime}\in \mathcal{N}^k, i^{\prime\prime}\in \mathcal{N}_{i}^{k}\\ k^{\prime\prime}\ne k^{\prime} \text{ or } i^{\prime\prime}\ne i^{\prime}}} \left( A_{i^{\prime\prime}jk^{\prime\prime}}-P_{i^{\prime\prime}jk^{\prime\prime}} \right) \right\}.
	\end{aligned}
\end{equation}
The first term in \eqref{1termdecomposed} satisfies
\begin{equation}\label{secondterm1}
	\sum_{\substack{k^{\prime}\in \mathcal{N}^k\\ i^{\prime}\in \mathcal{N}_{i}^{k^{\prime}}}}
	\sum_j \left( A_{i^{\prime}jk^{\prime}} - P_{i^{\prime}jk^{\prime}} \right)^2 
	= \sum_{\substack{k^{\prime}\in \mathcal{N}^k\\ i^{\prime}\in \mathcal{N}_{i}^{k^{\prime}}}}
	\left\| A_{i^{\prime}\cdot k^{\prime}} - P_{i^{\prime}\cdot k^{\prime}} \right\| _{2}^{2} 
	\le \left| \mathcal{N}_{i}^{k^{\prime}} \right| \cdot \left| \mathcal{N}^k \right|
\end{equation}
where the inequality is due to $\left| A_{i^{\prime}jk^{\prime}}-P_{i^{\prime}jk^{\prime}} \right|\le \left| e_{i^{\prime}jk^{\prime}} \right|=1$ for all $j$.

To bound the second term in \eqref{1termdecomposed}, we could consider the following expression:
\begin{equation}\label{secondterm2}
	\sum_{\substack{i^{\prime}\in \mathcal{N}_{i}^{k^{\prime}}\\ i^{\prime\prime}\in \mathcal{N}_{i}^{k^{\prime}}}}
	\left| 
	\sum_{\substack{k^{\prime}\in \mathcal{N}^{k}, k^{\prime\prime}\in \mathcal{N}^{k}\\ k^{\prime\prime}\ne k^{\prime}}}
	\sum_j \left( A_{i^{\prime}jk^{\prime}} - P_{i^{\prime}jk^{\prime}} \right) \left( A_{i^{\prime\prime}jk^{\prime\prime}} - P_{i^{\prime\prime}jk^{\prime\prime}} \right)
	\right|
\end{equation}
for any $k^{\prime\prime}\ne k^{\prime},j $ and $ 0<\epsilon \le 1$, by Bernstein's inequality we have
\begin{align*}
	&P\left\{ \frac{1}{n\left| \mathcal{N} ^k \right|\left( \left| \mathcal{N} ^k \right|-1 \right)}\left| \sum_{\begin{array}{c}
			k^{\prime}\in \mathcal{N} ^k,k^{\prime\prime}\in \mathcal{N} ^k\\
			k^{\prime\prime}\ne k^{\prime}\\
	\end{array}}{\sum_j{\left( A_{i^{\prime}jk^{\prime}}-P_{i^{\prime}jk^{\prime}} \right) \left( A_{i^{\prime\prime}jk^{\prime\prime}}-P_{i^{\prime\prime}jk^{\prime\prime}} \right)}} \right|\ge \epsilon \right\}\\
	\le &2\exp \left\{ \frac{-n\left| \mathcal{N} ^k \right|\left( \left| \mathcal{N} ^k \right|-1 \right) \epsilon ^2}{2\left( 1+1/\epsilon \right)} \right\} \approx 2\exp \left\{ \frac{-n\left| \mathcal{N} ^k \right|^2\epsilon ^2}{4} \right\}.
\end{align*}
Taking a union bound over all $ i^{\prime}$ and $i^{\prime\prime} $, let $C_1$ be an arbitrary global constants and let $n$ be large enough. Then by setting $\epsilon=\left\{ \left( C_1+8 \right) \log n/n|\mathcal{N}^k|^2 \right\} ^{1/2}$, we have
\begin{align}\label{secondterm3}
	&P\left\{ \underset{i^{\prime},i^{''}}{\max}\,\,\frac{1}{n\left| \mathcal{N} ^k \right|\left( \left| \mathcal{N} ^k \right|-1 \right)}\left| \sum_{\begin{array}{c}
			k^{\prime}\in \mathcal{N} ^k,k^{\prime\prime}\in \mathcal{N} ^k\\
			k^{\prime\prime}\ne k^{\prime}\\
	\end{array}}{\sum_j{\left( A_{i^{\prime}jk^{\prime}}-P_{i^{\prime}jk^{\prime}} \right) \left( A_{i^{\prime\prime}jk^{\prime\prime}}-P_{i^{\prime\prime}jk^{\prime\prime}} \right)}} \right|\ge \epsilon \right\}    \nonumber \\
	&\le 2n^2\exp \left\{ \frac{-n\left| \mathcal{N} ^k \right|^2\epsilon ^2}{4} \right\} =2n^{-C_1/4}\rightarrow 0.
\end{align}
Then we can bound  \eqref{secondterm2} as follows:
\begin{equation}\label{secondterm4}
	\begin{aligned}
		&\sum_{\substack{i^{\prime}\in \mathcal{N}_{i}^{k^{\prime}}\\ i^{\prime\prime}\in \mathcal{N}_{i}^{k^{\prime}}}}
		\left| 
		\sum_{\substack{k^{\prime}\in \mathcal{N}^{k}, k^{\prime\prime}\in \mathcal{N}^{k}\\ k^{\prime\prime}\ne k^{\prime}}}
		\sum_j \left( A_{i^{\prime}jk^{\prime}} - P_{i^{\prime}jk^{\prime}} \right) \left( A_{i^{\prime\prime}jk^{\prime\prime}} - P_{i^{\prime\prime}jk^{\prime\prime}} \right)
		\right| \\
		\le &n\left| \mathcal{N}_{i}^{k^{\prime}} \right|^2 \left| \mathcal{N}^{k} \right| \left( \left| \mathcal{N}^{k} \right| - 1 \right) 
		\left( \frac{(C_1 + 8)\log n}{n\left| \mathcal{N}^{k} \right|^2} \right)^{1/2}
		\approx n\left| \mathcal{N}_{i}^{k^{\prime}} \right|^2 \left| \mathcal{N}^{k} \right|^2 
		\left( \frac{(C_1 + 8)\log n}{n\left| \mathcal{N}^{k} \right|^2} \right)^{1/2}
	\end{aligned}
\end{equation}
Plugging \eqref{secondterm1}, \eqref{secondterm3} and \eqref{secondterm4} into \eqref{decomposition}, with probability $1-2n^{C_1/4}$, for all $(i,j)$ simultaneously, we have
\begin{align}\label{J1}
	\frac{1}{n^2M^2}&\sum\nolimits_{i,j}^{}{J_1\left( i,j,k \right)}=\frac{1}{n^2M^2}\sum\nolimits_i^{}{\left\{ \left| \mathcal{N} ^k \right|\left| \mathcal{N} _{i}^{k^{\prime}} \right|+n\left| \mathcal{N} _{i}^{k^{\prime}} \right|^2\left| \mathcal{N} ^k \right|^2\left( \frac{(C_1+8)\log n}{n\left| \mathcal{N} ^k \right|^2} \right) ^{1/2} \right\}}  \nonumber\\
	&\approx \frac{n}{n^2\left| \mathcal{N} ^k \right|^2\left| \mathcal{N} _{i}^{k^{\prime}} \right|^2}\left\{ \left| \mathcal{N} ^k \right|\left| \mathcal{N} _{i}^{k^{\prime}} \right|+n\left| \mathcal{N} _{i}^{k^{\prime}} \right|^2\left| \mathcal{N} ^k \right|^2\left( \frac{(C_1+8)\log n}{n\left| \mathcal{N} ^k \right|^2} \right) ^{1/2} \right\} \nonumber\\
	&=\frac{1}{n\left| \mathcal{N} ^k \right|\left| \mathcal{N} _{i}^{k^{\prime}} \right|}+\left( \frac{C_1+8}{K} \right) ^{1/2}\left( \frac{\log n}{nK} \right) ^{1/6}
\end{align}
We temporarily retain this form for the first term. In the subsequent proofs, we will demonstrate that this term does not influence the final results. 

For the  $J_2\left( i,j,k \right)$, by Lemma \ref{lemma1} given before, we can bound it with probability  $1-2n^{-\tilde{C_2}/4}$ for all $(i,j)$ simultaneously, we have
\begin{align}\label{J2}
	&\frac{1}{n^2M^2}\sum\nolimits_{i,j}^{}{J_2\left( i,j,k \right)}=\frac{1}{n^2M^2}\sum\nolimits_{i,j}^{}{\left( \sum\nolimits_{k^{\prime},i^{\prime}}^{}{\left( P_{i^{\prime}jk^{\prime}}-P_{ijk^{\prime}} \right)} \right) ^2}   \nonumber  \\
	\le &\frac{1}{\left( n\left| \mathcal{N} _{i}^{k^{\prime}} \right|\left| \mathcal{N} ^k \right| \right) ^2}\sum\nolimits_i^{}{\left| \mathcal{N} _{i}^{k^{\prime}} \right|\left| \mathcal{N} ^k \right|}\sum\nolimits_{i^{\prime}\in \mathcal{N} _{i}^{k^{\prime}}}^{}{\sum\nolimits_{k^{\prime}\in \mathcal{N} ^k,j}^{}{\left( P_{i^{\prime}jk^{\prime}}-P_{ijk^{\prime}} \right) ^2}}    \nonumber   \\
	\le &\frac{1}{\left( n\left| \mathcal{N} _{i}^{k^{\prime}} \right|\left| \mathcal{N} ^k \right| \right) ^2}\sum\nolimits_i^{}{\left| \mathcal{N} _{i}^{k^{\prime}} \right|\left| \mathcal{N} ^k \right|}\sum\nolimits_{i^{\prime}\in \mathcal{N} _{i}^{k^{\prime}}}^{}{n\left| \mathcal{N} ^k \right|\left[ 6L_n\Delta _n+8\left\{ \frac{\left( \tilde{C}_2+\tilde{C_2}+8 \right) \log n}{nK} \right\} ^{1/3} \right]}  \nonumber  \\
	=&\left\{ 6\tilde{C}_0L_n+8\left( \tilde{C_2}+8 \right) ^{1/3} \right\} \left( \frac{\log n}{nK} \right) ^{1/3}+\frac{32}{n},
\end{align}
where  $J_2\left( i,j,k \right) =\left( \sum\nolimits_{k^{\prime},i^{\prime}}^{}{\left( P_{i^{\prime}jk^{\prime}}-P_{ijk^{\prime}} \right)} \right) ^2\le \left| \mathcal{N} _{i}^{k^{\prime}} \right|\left| \mathcal{N} ^k \right|\sum\nolimits_{k^{\prime},i^{\prime}}^{}{\left( P_{i^{\prime}jk^{\prime}}-P_{ijk^{\prime}} \right) ^2}$.

For the last term $J_3\left( i,j,k \right)$, we can decomposed it by Cauchy-Schwarz inequality:
\begin{align*}
	&J_3\left( i,j,k \right) =\left( \sum\nolimits_{k^{\prime},i^{\prime}}^{}{\left( P_{ijk^{\prime}}-P_{ijk} \right)} \right) ^2=\left( \sum\nolimits_{k^{\prime}\in \mathcal{N} ^k}^{}{\left| \mathcal{N} _{i}^{k^{\prime}} \right|\left( P_{ijk^{\prime}}-P_{ijk} \right)} \right) ^2\\
	\le& \left[ \sum\nolimits_{k^{\prime}\in \mathcal{N} ^k}^{}{\left| \mathcal{N} _{i}^{k^{\prime}} \right|^2} \right] \left[ \sum\nolimits_{k^{\prime}\in \mathcal{N} ^k}^{}{\left( P_{ijk^{\prime}}-P_{ijk} \right) ^2} \right]
	=\left| \mathcal{N} ^k \right|\left| \mathcal{N} _{i}^{k^{\prime}} \right|^2\sum\nolimits_{k^{\prime}\in \mathcal{N} ^k}^{}{\left( P_{ijk^{\prime}}-P_{ijk} \right) ^2}
\end{align*}

Then we can bound $J_3\left( i,j,k \right)$ by Lemma \ref{lemma2}, with probability $1-2\exp \left( -n^{C_3/4} \right)$ for all $(i,j)$ simultaneously, we have
\begin{align}\label{J3}
	\frac{1}{n^2M^2}\sum\nolimits_{i,j}^{}{J_3\left( i,j,k \right)}&\le \frac{1}{n^2M^2}\sum\nolimits_{i,j}^{}{\left| \mathcal{N} ^k \right|\left| \mathcal{N} _{i}^{k^{\prime}} \right|^2\sum\nolimits_{k^{\prime}}^{}{\left( P_{ijk^{\prime}}-P_{ijk} \right) ^2}}   \nonumber\\
	&=\frac{\left| \mathcal{N} ^k \right|\left| \mathcal{N} _{i}^{k^{\prime}} \right|^2}{n^2\left| \mathcal{N} ^k \right|^2\left| \mathcal{N} _{i}^{k^{\prime}} \right|^2}\sum\nolimits_{k^{\prime}}^{}{\sum\nolimits_{i,j}^{}{\left( P_{ijk^{\prime}}-P_{ijk} \right) ^2}}\nonumber\\
	&=\frac{1}{n^2}\left[ 8n^{1+C_3}+6n^2\tilde{C}_0L_K\left( \frac{\log n}{nK} \right) ^{1/3} \right]\nonumber\\
	&=\frac{8}{n^{1-C_3}}+6\tilde{C}_0L_K\left( \frac{\log n}{nK} \right) ^{1/3}.
\end{align}

Combining \eqref{J1}, \eqref{J2} and \eqref{J3} 
and when $K\gtrsim  \left( n/\log n \right) ^{1/2}$, then we have
\begin{equation}\label{Ksmall}
	\begin{split}
		\frac{1}{n^2}\sum\nolimits_{i,j}^{}{\left( \tilde{P}_{ijk}-P_{ijk} \right) ^2}&\le \frac{3}{n^2M^2}\sum_{i,j}{\left\{ J_1\left( i,j,k \right) +J_2\left( i,j,k \right) +J_3\left( i,j,k \right) \right\}}\\
		&\le \tilde{C_1}\left( \frac{\log n}{nK} \right) ^{1/3}+\frac{\tilde{C_3}}{n^{1-\alpha}}.
	\end{split}	
\end{equation}
This completes the proof of Lemma \ref{lemma3}.
\end{proof}

\begin{center}
	\textbf{Proof of Theorem \ref{theorem1}}
\end{center}
\begin{proof}
	According to the condition $K\lesssim  n^2\log n$ in Theorem \ref{theorem1}, The first term in \refeq{Ksmall} is the dominating order. Thus we only need to derive the upper bound when $K< C^{-3/2}\sqrt{\frac{n}{\log n}}$.  It is observed that this scenario implies $h_1=1/K$ and $h_2=C^{\frac{3}{2}}\sqrt{\frac{\log n}{n}}$. Under such conditions, the proposed method employs only a single-layer network, thereby degenerating to the approach of \cite{zhang2017estimating}. According to Theorem 1 in \cite{zhang2017estimating} with $h =C^{-3/2}\sqrt{\frac{n}{\log n}}$, there exist constants $C_4>0$ and $C_5>0$ such that, with probability at least $1 - n^{-C_5}$, we have
	\begin{equation}\label{K1}
		\frac{1}{n^2}\sum\nolimits_{i,j}^{}{\left( \tilde{P}_{ijk}-P_{ijk} \right) ^2}\le C_4\left(\frac{\log n}{n}\right)^{\frac{1}{2}}.
	\end{equation}
Combining \eqref{Ksmall} and \eqref{K1} completes the proof of Theorem \ref{theorem1}.
\end{proof}

\begin{center}
	\textbf{Proof of Theorem \ref{theorem2}}
\end{center}
\begin{proof}
Notice $K> n^2\log n$ indicates the second term in \refeq{Ksmall} becomes the dominant component, thereby establishing Theorem \ref{theorem2}.
\end{proof}

\section{Discussion}
We propose a model-free, scalable method for estimating probability matrices in multi-layer networks. Our approach leverages adaptive neighborhood selection to fully utilize similarity information across network layers, thereby enhancing the estimation accuracy of the probability matrix for each given layer. The proposed MNS method is computationally efficient and requires little tuning. 

While the proposed method is proven to be consistent, achieving the convergence rate $O\left( \left\{ \log n/\left( nK \right) \right\} ^{1/3} \right)$ requires a large network size, which may be challenging in practical applications. This study relies on the independence across layers in multi-layer networks. Future work could relax this assumption to better accommodate the complexities of real-world networks. Additionally, with advancements in tensor modeling, our approach could be extended to higher-order tensors, integrating tensor-based methods for broader applicability. In theory, we have not demonstrated the minimax optimality of our method. Establishing a general lower bound is an important direction for future research.

\section*{Acknowledgment}
Li' work is supported by Zhejiang Provincial Natural Science Foundation of China (LQ23A010008), the Summit Advancement Disciplines of Zhejiang Province (Zhejiang Gongshang University - Statistics), and Collaborative Innovation Center of Statistical Data  Engineering Technology \& Application.






\bibliographystyle{elsarticle-num-names}
\begin{spacing}{0.8}
\bibliography{paper-ref}
\end{spacing}

\end{document}